\DeclareMathOperator*{\argmax}{arg\,max}
\DeclareMathOperator*{\argmin}{arg\,min}
\def\BState{\State\hskip-\ALG@thistlm}
\newtheorem{proposition}{Proposition}
\newtheorem{definition}{Definition}
\newtheorem{remark}{Remark}
\def\BState{\State\hskip-\ALG@thistlm}
\def\BibTeX{{\rm B\kern-.05em{\sc i\kern-.025em b}\kern-.08em
    T\kern-.1667em\lower.7ex\hbox{E}\kern-.125emX}}
\begin{document}

\title{{\bf\Large PRADA: Proactive Risk Assessment and Mitigation of Misinformed Demand Attacks on Navigational Route Recommendations}
}

\author{Ya-Ting Yang, Haozhe~Lei, and Quanyan Zhu
\thanks{The Authors are with the Department of Electrical and Computer Engineering, New York University, Brooklyn, NY, 11201, USA; E-mail: {\tt\small \{yy4348, hl4155, qz494\}@nyu.edu}. Correspondence should be sent to YY.}%
\thanks{This work has been submitted to the IEEE for possible publication. Copyright may be transferred without notice, after which this version may no longer be accessible.}
}

\maketitle

\begin{abstract}
Leveraging recent advances in wireless communication, IoT, and AI, intelligent transportation systems (ITS) played an important role in reducing traffic congestion and enhancing user experience. Within ITS, navigational recommendation systems (NRS) are essential for helping users simplify route choices in urban environments. However, NRS are vulnerable to information-based attacks that can manipulate both the NRS and users to achieve the objectives of the malicious entities.
This study aims to assess the risks of misinformed demand attacks, where attackers use techniques like Sybil-based attacks to manipulate the demands of certain origins and destinations considered by the NRS. We propose a game-theoretic framework for proactive risk assessment of demand attacks (PRADA) and treat the interaction between attackers and the NRS as a Stackelberg game. 
Specifically, we consider the case of local-targeted attacks, in which the attacker aims to make the NRS recommend the authentic users towards a specific road that favors certain groups. Our analysis unveils the equivalence between users' incentive compatibility and Wardrop equilibrium recommendations and shows that the NRS and its users are at high risk when encountering intelligent attackers who can significantly alter user routes by strategically fabricating non-existent demands.
To mitigate these risks, we introduce a trust mechanism that leverages users' confidence in the integrity of the NRS, and show that it can effectively reduce the impact of misinformed demand attacks. Numerical experiments are used to corroborate the results and demonstrate a Resilience Paradox, where locally targeted attacks can sometimes benefit the overall traffic conditions.
\end{abstract}

\begin{IEEEkeywords}
Information attack, risk assessment, Stackelberg game, navigational recommendation.
\end{IEEEkeywords}

\section{Introduction}

Harnessing vast information available from modern wireless communication and Internet of Things (IoT) advancements \cite{lv2020ai,zantalis2019review}, coupled with the progress made in data science and artificial intelligence \cite{veres2019deep,haydari2020deep}, intelligent transportation systems (ITS) have gained substantial attention for their ability to effectively tackle traffic congestion and elevate driver experiences. Within ITS, navigational recommendation systems (NRS) such as Google Maps and Apple Maps play a vital role in complex urban environments, as users, including drivers and pedestrians, may be overwhelmed by diverse route choices \cite{van2016user}. Based on the given information, the NRS offers routes to simplify users' decision-making processes, aiming to reduce travel duration, elevate user experiences, and alleviate congestion \cite{8262884}. However, unlike routing in computer network systems \cite{Multihoming,zhang2010optimizing} or routing in transportation networks for connected autonomous vehicles \cite{rossi2018routing}, the NRS involves human drivers who may not always adhere to the recommendations, making user compliance unguaranteed \cite{10.1145/3474717.3483652}. Therefore, in this work, we refer to the NRS as the platform that provides incentive-compatible recommendations \cite{ning2023robust,yang2023strategic}. This ensures that users can not be better off by unilaterally deviating from the recommended strategies, and can be interpreted as a routing game between NRS users.

\begin{figure}
    \centering
    \includegraphics[width=3.3in]{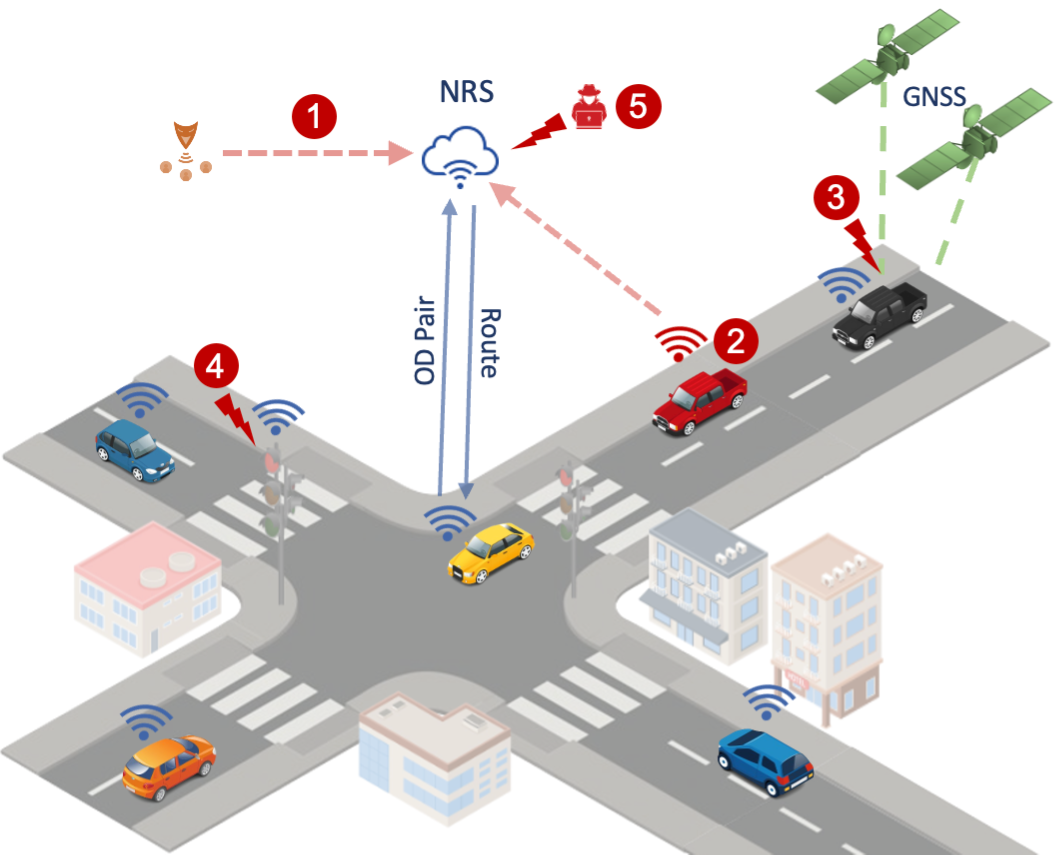}
    \caption{The NRS receives user origin and destination (OD) requests, which are vulnerable to exploitation by malicious entities for demand attacks: (1) Sybil attack (2) Botnet (3) GPS spoofing (4) Man-in-the-middle attack (5) Insider threats.}
    \label{fig:NRS_process}
\vspace{-3mm}
\end{figure}

Nonetheless, as illustrated in Fig. \ref{fig:NRS_process}, the navigational recommendation process is prone to various vulnerabilities \cite{mecheva2020cybersecurity} that attackers can leverage to promote particular groups or businesses in a locally targeted sense or potentially exacerbate congestion levels on a broader network-wide scale. Within this context, information-based attacks emerge as a critical concern, as they empower malicious entities to spread misinformation and manipulate both the NRS and its users to achieve their objectives \cite{Waniek2021}. For example, on the Waze platform, residents may fabricate congestion reports to divert traffic away from their residential areas, aiming to maintain the tranquility of their surroundings \cite{waze_resident}. Additionally, the recent study \cite{eryonucu2022sybil} demonstrates how Sybil-based attacks can effectively manipulate perceived crowdedness at places of interest and traffic congestion levels within Google Maps, which serves as an example that shows how fake users (that can lead to fake demands) can impact the NRS. These recent findings highlight that the informational attack is a significant threat within NRS.  Therefore, it is essential to understand the impact of these informational attacks, evaluate their risks, and develop strategies to mitigate and manage them. Within this scope, our study is motivated by the recent research \cite{eryonucu2022sybil} and specifically focuses on the misinformed \textit{demand attack}, which is defined as the manipulation of user demands between certain origins and destinations, and can be achieved through a variety of attack methods, including but not limited to Sybil-based attack, botnet, GPS spoofing, man-in-the-middle attack on wireless communication, insider threat, etc.

To proactively assess the risks of demand attacks on the NRS, one viable approach is to employ a data-driven method. This involves constructing or utilizing an existing NRS to simulate users over an urban transportation network, then launching demand manipulation attacks through various methods to collect data for analysis \cite{sarker2020cybersecurity}. While the data-driven method can yield realistic results, it is often task-specific and time-consuming to implement.
A more cost-effective alternative is the model-based approach, which can also provide us with an outcome that is capable of measuring and analyzing
the risk. However, the model-based approach is typically fragmented, as comprehensive risk assessment requires multiple models, including those for attackers, the NRS, users, and their interactions.
To bridge this gap, we introduce a holistic model-based approach using a game-theoretic framework for proactive risk assessment of demand attacks (PRADA). As depicted in Fig. \ref{fig:dara}, this approach aims to integrate necessary models into a cohesive framework, providing more comprehensive results and analyses for risk assessment.

The proposed PRADA framework is analyzed through three layers of games. The \textit{inner layer} is the model for the NRS, as incentive-compatible recommendations can be interpreted as a routing game between NRS users. The equivalence between user incentive-compatible recommendations and Wardrop equilibrium recommendations aids in analyzing the \textit{middle layer}, which models the interaction between the threat model of the malicious entity and the NRS as a Stackelberg game \cite{fang2021introduction,yang2023strategic}. In this game, the attacker acts as the leader who conveys misinformed demands, while the NRS, as the follower, responds to the provided information. The \textit{outer layer} captures the interplay between the PRADA risk evaluator (responsible for conducting the risk assessment) and the Stackelberg game at the middle layer. Specifically, this study focuses on the locally targeted attack for the attacker's objective, as it has rather few systematic studies. In this attack, the attacker manipulates the NRS to direct genuine users towards a specific road that benefits certain groups or businesses. From the perspective of the PRADA risk evaluator, our analysis demonstrates that by strategically designing the misinformed demands, such as how many fake users for which OD pairs, the attacker can make the NRS redistribute the true users originally on other alternative paths towards the target road, leading to a higher risk for the NRS and its users.

With the proposed PRADA framework, this study takes one step further by introducing a viable mitigation method. We utilize the concept of \textit{trust score}, which measures how much users trust the integrity of the NRS and believe the recommendations are not manipulated. A higher trust score indicates greater user willingness to follow recommendations that differ from previous ones for the same origin and destination. By proposing a trust mechanism that incorporates trust score constraints into the model for NRS, we can effectively reduce the impact of demand attacks and lower the risk both locally and network-wide. Our analysis indicates that the dual variable associated with the trust constraint can be interpreted as a \textit{trust risk factor}. It shows how sensitive the user's expected cost is to changes in the trust score.

To this end, our contribution can be summarized as follows:
\begin{itemize}
    \item We identify vulnerabilities of the NRS and then propose a game-theoretic framework for holistic proactive risk assessment for misinformed demand attacks (PRADA). 
    \item We employ a Stackelberg game approach to integrate the threat model with the NRS into the PRADA framework. Our analytical results and numerical studies demonstrate that users are at high risk when encountering intelligent attackers who target specific roads by fabricating fake demands on alternative paths.
    \item We introduce a trust mechanism that leverages users' confidence in the integrity of the NRS. Our findings show that the resulting trusted recommendation can effectively mitigate the impact of demand attacks, both in local-targeted and network-wide contexts.
\end{itemize}

\section{Literature Review}

\noindent \textbf{Generic Attacks on ITS:} Intending to enhance mobility, safety, sustainability, and traffic efficiency in urban transportation networks, modern ITS leverages a wide range of advanced technologies. These include sensors and cameras for data collection, wireless communication—particularly vehicle-to-everything (V2X) technology—for information exchange, GPS for precise positioning, data analytics coupled with AI for processing, as well as mobile apps for distributing information. However, the extensive network of interconnected devices with vast information exchanged in ITS introduces vulnerabilities \cite{hahn2019security} that expand the potential cyber-physical attack surface (see \cite{huq2017cyberattacks} for real-world ITS attack cases). Within the domain of ITS, malicious entities or potential adversaries \cite{mecheva2020cybersecurity,cybersecurity_transit} can exploit vulnerabilities within data and information infrastructure through a class of attacks known as informational attacks. These attacks, which include data falsification, data integrity breaches, and data poisoning \cite{almalki2020review,pan2022poisoned}, are designed to divert drivers and escalate traffic congestion that leads to increased crash risks within urban transportation networks. These attacks exploit various tactics, including sensor and GPS spoofing techniques \cite{8761439}, as well as employing man-in-the-middle and Sybil-based methods \cite{8337099,eryonucu2022sybil}.

\noindent \textbf{Informational Attacks on NRS:} We scrutinize the particular vulnerabilities inherent to the NRS, which are susceptible to a wide range of potential attacks \cite{8761439}. Regarding generic attacks in the scenario of navigational guidance, attackers could compromise vehicles via wireless communication networks or manipulate real-time traffic conditions, leading to informational attacks. Such attacks can result in inaccurate traffic predictions and misguidance for drivers, contributing to network-wide traffic congestion and safety concerns \cite{8375813}. For more specific examples, \cite{10.1145/3032970.3032983} illustrates how the availability of portable GPS signal spoofing devices enables attackers to divert drivers from their intended destinations without their awareness. Additionally, \cite{8337099} demonstrates the significant impact of a single Sybil device with limited resources on platforms like Waze, where false reports of congestion and accidents can automatically reroute user traffic. This work expands the threats discovered by recent studies \cite{eryonucu2022sybil} targeting NRS, where misinformation, such as fabricated demands, originates from Sybil-based users.
Specifically, we assess the risk of locally targeted attacks that have rather few systematic studies, wherein attackers tend to strategically mislead users onto specific roads that favor certain groups. 

\noindent \textbf{Risk Assessment:}
Risk assessment is a systematic process for identifying, analyzing, and evaluating risks within a particular system or framework in various domains, including but not limited to energy systems \cite{chehri2021security}, supply chains \cite{pournader2020review}, IoT-based systems \cite{ntafloukas2022cyber}, autonomous vehicles \cite{li2021risk}, and transportation networks \cite{koohathongsumrit2021integrated,luo2021threat}. It aims to understand potential adverse outcomes, enabling organizations or individuals to mitigate or manage such risks effectively. 
Within the field of transportation, risk assessment plays an important role, as evidenced by the substantial focus on (highway) crash risk evaluation \cite{liu2022transfer}, collision risk avoidance for autonomous vehicles \cite{li2021risk}, and risk-based route selection \cite{koohathongsumrit2021integrated}. 
When addressing potential cyber risks in ITS, a deeper understanding of the attack model is necessary \cite{kalinin2021cybersecurity}. Cyber attackers are often intelligent and strategic, unlike non-strategic attackers who add disturbances uniformly or randomly. Therefore, a natural way to integrate the attack model into risk assessment and capture the interaction between the attacker and the target system is through game-theoretic approaches. These approaches are commonly employed to capture the threat posed by followers in dynamic games, such as Stackelberg games \cite{casorran2019study,yang2023game}, bargaining games \cite{guerrero2018solving}, as well as in mechanism design problems involving contract designs \cite{zhang2019mathtt} and incentive mechanisms \cite{zhu2012guidex}, offering analytical tools and strategies for effective risk assessment and mitigation.

\begin{figure}
    \centering
    \includegraphics[width=3.4in]{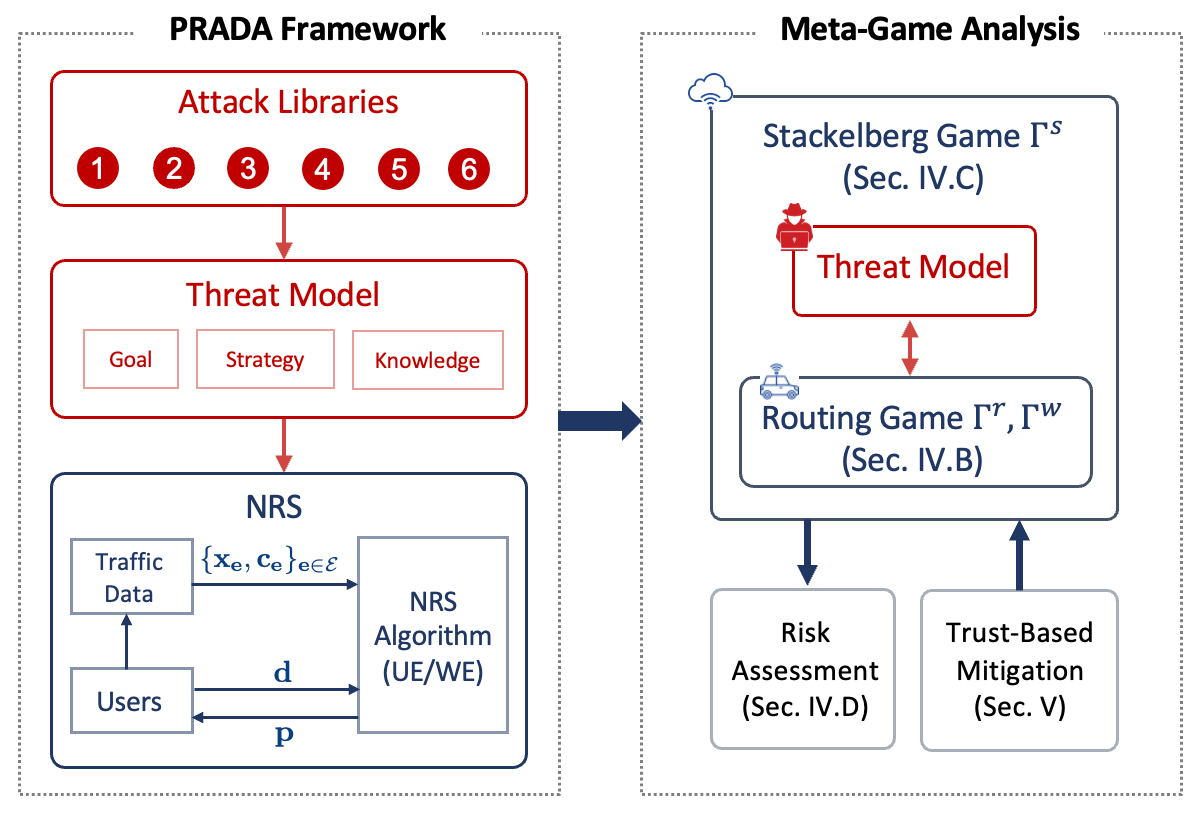}
    \caption{The PRADA framework is analyzed through three layers of games between users, the NRS, the threat model of the attacker, and the PRADA risk evaluator.}
    \label{fig:dara}
\vspace{-4mm}
\end{figure}

\section{Misinformed Demand Attack}
The nature of ITS, characterized by a vast network of interconnected devices and extensive data exchange, presents vulnerabilities that can be exploited by malicious entities through informational attacks, which target the system's data and information infrastructure. In this study, we specifically investigate one type of informational attack within the context of navigational recommendations, called the \textit{demand attack}. As depicted in Fig. \ref{fig:NRS_process}, the NRS typically receives navigational requests, including origin and desired destination (OD) pairs, from users. These requests contribute to the demand associated with each OD pair. More specifically, we denote $\Theta$ as the set of all possible OD pairs, and each OD pair $\theta \in \Theta$ is associated with a demand $d_\theta \in \mathbb{R}_{\ge 0}$ contributed from the users. Let $\textbf{d} = \{d_\theta\}_{\theta \in \Theta}$ represent all the demands for later usage. For instance, suppose there are ten NRS users who wish to travel from the Empire State Building (origin) to Times Square (destination), which corresponds to OD pair $\theta$, then the demand for this OD pair is $d_\theta=10$. 

Different demands generally lead to different recommendations from the NRS. For instance, with a single user, the NRS can simply suggest the shortest path. However, if many users are associated with the same OD pair, recommending the shortest path can lead to the `flash crowd effect' \cite{flash_crowd_effect}, making it no longer the optimal choice. Malicious entities can exploit this fact to manipulate demand $\textbf{d}$ to some other $\textbf{d}^\prime$, steering the NRS toward other recommendations that fulfill their own objectives. Therefore, the PRADA risk evaluator, who is responsible for conducting the risk assessment, must evaluate the risks for various libraries of attacks, consisting of different malicious goals,  types of attackers, and attack methods that can lead to demand attacks.

\subsection{Demand Attack Methods} \label{sec:att_methods}
In this subsection, we mention some techniques indicated in Fig. \ref{fig:NRS_process} that the attacker with related knowledge can utilize to launch the misinformed demand attack.

\subsubsection{Sybil Attacks}
Attackers can generate numerous fake identities (non-existent users) as shown in Fig. \ref{fig:NRS_process} and then simulate these users at specific locations \cite{8337099}. These non-existent users send OD pair requests to the NRS through emulators \cite{eryonucu2022sybil}. When computing recommendations, the NRS considers these fake demands alongside genuine ones, leading to recommendations that differ from those based solely on authentic demands. Consequently, the attacker can strategically redistribute legitimate users by launching Sybil-based attacks with fake demands. Furthermore, since the non-existent users do not actually drive on the roads after receiving the recommendations, the actual traffic conditions caused by legitimate users will differ from the NRS's expectations.

\subsubsection{Botnet}
An attacker can deploy a botnet, a network of compromised devices controlled remotely \cite{ashraf2021iotbot}. These devices can range from infected smartphones to IoT devices and computers. The attacker can command the botnet to send numerous navigational OD pair requests to the NRS, as shown in Fig. \ref{fig:NRS_process}, simulating authentic users seeking guidance between various origins and destinations. Similar to the Sybil attack, the resulting recommendations will differ due to the fake demands (that do not exist on roads) generated by the botnet. This allows the attacker to  redistribute legitimate users by utilizing the botnet to strategically create fake demands, aligning the recommendations with the attacker's objectives.

\subsubsection{GPS Spoofing}
Attackers can employ GPS spoofing techniques \cite{10.1145/3032970.3032983,8761439} to alter the perceived location of NRS users, as illustrated in Fig. \ref{fig:NRS_process}. This manipulation can cause users to send navigation requests with incorrect origins. For example, when a user selects `current location' as the origin, the spoofed GPS signal can make the NRS believe the user is in a different place. Consequently, the requested OD pair is being manipulated, affecting the overall demand considered by the NRS. This disruption can significantly impact navigational recommendations, especially if multiple NRS users are affected simultaneously, leading to incorrect navigational recommendations 
 (that may align with the attacker's objective).

\subsubsection{Wireless Communication Network}
We use the man-in-the-middle attack as an illustrative example of a demand attack through wireless communication networks \cite{al2020review}. In this scenario, attackers attempt to position themselves between the user's device and the NRS server, allowing them to intercept communications in between (see Fig. \ref{fig:NRS_process}).
When a user sends a request for route recommendations for a specific OD pair, the attacker modifies the request before it reaches the system's servers. This modification can involve altering the origin, destination, or other parameters within the request. By manipulating multiple requests from different users, the attacker can increase or decrease the demand for specific OD pairs. As a result, the manipulated demand influences the NRS's recommendation to align with the attacker's objectives. 

\subsubsection{Insider Threats}
An insider, such as an employee or contractor with access to the NRS infrastructure, can directly manipulate the data or algorithm within the system. This manipulation may involve altering the demand data, such as increasing or decreasing the number of requests for specific OD pairs. By manipulating the demand data, the insider can bias the recommendations suggested by the NRS to align with their objectives.

\subsection{Types of Attackers}
We can categorize attackers into two main types.
\subsubsection{Non-Strategic Attacker} Non-strategic attackers may lack the understanding of how the NRS generates recommendations for users, or they may not pay attention to and simply disregard this information. Instead, a non-strategic attacker often manipulates demands by uniformly or randomly increasing or decreasing the number of requests associated with some OD pairs. They then observe whether such manipulation achieves their desired outcome.

\subsubsection{Strategic Attacker}
Strategic attackers are often more intelligent and possess either a deep understanding of how the NRS generates recommendations for users or the ability to model the NRS. With this knowledge, they assess or observe the outcomes of the NRS when under attack. By leveraging this insight, strategic attackers can utilize efficient strategies to achieve their objectives with fewer resources used and less time spent.

\subsection{Attacker's Goals}
Imagine an attacker driven by self-interest, in conflict with the overall social welfare goal of reducing congestion. This scenario can be studied at both local targeted and network-wide levels: the former pertains to specific groups or locations, while the latter considers the system-wide impact.

\subsubsection{Local-Targeted Attacks}The attacker seeks to bias the system by suggesting paths that favor particular groups (e.g., higher-paying users) or businesses (e.g., those paying the attacker to ensure users see particular ads or pass by their shops). For example, a restaurant owner could pay malicious entities to ensure a certain volume of users are directed by the NRS to pass by the road where the restaurant is located, as illustrated in Fig. \ref{fig:target}.
\begin{figure}
    \centering
    \includegraphics[width=3.0in]{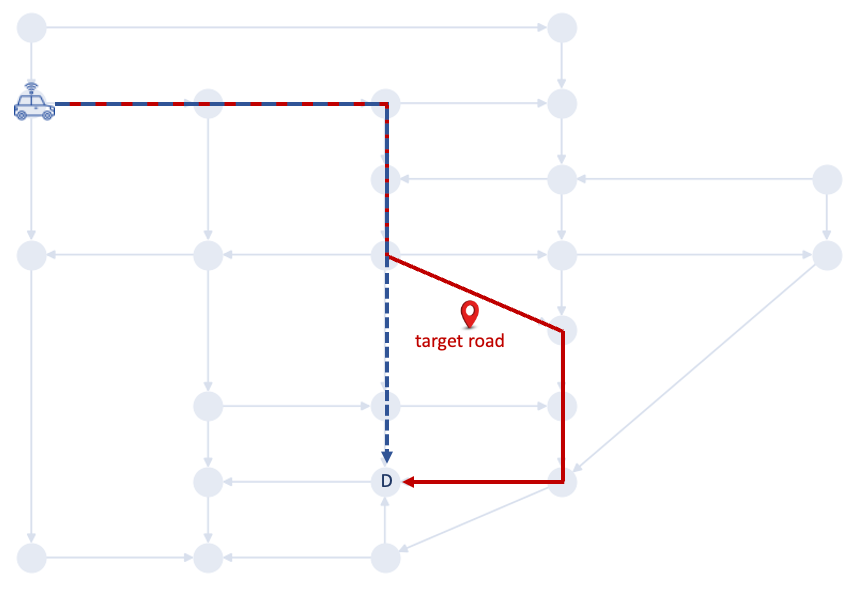}
    \caption{Local-targeted attacks: The malicious entity manipulates the NRS to guide users through the target road. The blue dashed line represents the original recommendation from the NRS, and the red line illustrates the one under attack.}
    \label{fig:target}
\vspace{-3mm}
\end{figure}
The impact of these attacks can be measured by the difference in traffic flow on specific roads with and without the attacks.

\subsubsection{Network-Wide Attacks} The attacker aims to disrupt the system by increasing delays or congestion levels across the network, consequently raising the overall travel time cost for users. These attacks can harm the system's reputation, leading to user dissatisfaction or a loss of trust in the NRS. 

It is worth noting that while this study focuses on malicious entities with local-targeted objectives, these attacks can result in both local-targeted and network-wide impacts. The metrics used to assess these risks are detailed in Section \ref{sec:metrics}.

\section{Proactive Risk Assessment}
This section aims to assess the risk caused by misinformed demand attacks on the NRS. The proposed framework for proactive risk assessment of demand attacks (PRADA) is illustrated in Fig. \ref{fig:dara}. 
The PRADA evaluator proactively evaluates risks by employing a library of attack scenarios. Each attack scenario is characterized by the attacker's goal, attack strategy (type of attacker), and attack method (knowledge), which together form the threat model. The resulting demand attack from the threat model then affects the recommendations suggested by the NRS.

The framework is analyzed through meta-game, which consists of three layers of games.
The inner layer focuses on the routing game between NRS users. That is, the NRS aims to provide incentive-compatible recommendations to users, as human users may not always follow recommendations if they find better alternatives that align with their preferences. 
The middle layer employs a Stackelberg game approach to capture the interaction between the threat model of the attacker and the NRS. Here, the attacker acts as the leader, manipulating demand, while the NRS, as the follower, responds to the provided information.
The outer layer is a meta-game for the interplay between the PRADA risk evaluator and the middle layer. It involves assessing the impacts of different threat models from the attack libraries.
We then conduct a sensitivity analysis of demand attacks and propose metrics for measuring local-targeted and network-wide impacts.

\subsection{Settings for Urban Transportation Networks}
The urban transportation network can be represented by $\mathcal{G}=\{\mathcal{V}, \mathcal{E}\}$, where the set of nodes $\mathcal{V}$ corresponds to intersections and the set of edges $\mathcal{E}$ indicates the roads. Traveling along a road $e\in \mathcal{E}$ incurs a road-specific cost $c_e: \mathbb{R}_{\ge 0} \mapsto \mathbb{R}_{+}$ associated with the flow $x_e \in \mathbb{R}_{\ge 0}$ on road $e$. One usual choice for the cost $c_e(\cdot)$ is the standard Bureau of Public Roads (BPR) function $$c_e(x_e)=t_e\left(1+\alpha\left(\frac{x_e}{k_e}\right)^\beta\right)$$ for travel time costs. Here, $t_e \in \mathbb{R}_{+}$ represents the free-flow travel time on road $e$, $k_e \in \mathbb{R}_{+}$ signifies the capacity of road $e$, and $\alpha, \beta \in \mathbb{R}_{\ge 0}$ are some parameters.

\subsection{Models for NRS}
The user set of NRS is denoted as $\mathcal{U}$. Each user, $u \in \mathcal{U}$, is associated with a specific origin $O_u \in \mathcal{V}$ and destination $D_u \in \mathcal{V}$ pair. We refer to the pair as an OD pair, expressed by $\theta_u = (O_u, D_u)$, and the set of OD pairs for the NRS users is $\Theta_{\mathcal{U}} \subseteq \Theta$, with $\Theta =  |\mathcal{V}| \times |\mathcal{V}|$ representing all the possible OD pairs within the network. Then, user $u$ with OD pair $\theta_u$ has the feasible path choice set $\mathcal{S}_u  = \{s_{u, 1}, \ldots, s_{u, k_u}\}$, which is identical to the feasible path choice set $\mathcal{S}_\theta  = \{s_{\theta, 1}, \ldots, s_{\theta, k_{\theta}}\}$ for OD pair $\theta$, where $\theta=\theta_u$. Each choice $s_{u, i} \in \mathcal{S}_u$ or $s_{\theta, i} \in \mathcal{S}_\theta$ provides the user $u$ a path from the origin to the desired destination. To this end, the elements of the urban transportation network considered by the NRS can be encapsulated using the notation $\mathscr{R}=\left\langle   \mathcal{G}, (c_e(\cdot))_{e \in \mathcal{E}}, \mathcal{U}, (\mathcal{S}_u)_{u \in \mathcal{U}}  \right\rangle$, and we call $\mathscr{R}$ the ``NRS component''.  

\subsubsection{User Equilibrium Recommendations (UE)}

We consider the scenario where the NRS recommends a mixed strategy over feasible path choices to the users.
Define $\mathcal{P}_u:=\Delta \mathcal{S}_u$ as the simplex of $\mathcal{S}_u$. A mixed strategy for user $u$ is $\textbf{p}_u \in \mathcal{P}_u$ so that $\textbf{p}_u=\{p_{u, i}\}_{s_{u, i} \in \mathcal{S}_u}$ is a probability distribution over $\mathcal{S}_u$. Each element $p_{u, i}\in [0,1]$ denotes the probability that the NRS recommends path $s_{u, i} \in \mathcal{S}_u$ to user $u$, and needs to satisfy the constraints $\sum_{i=1}^{k_u}p_{u, i}=1 , \ \forall u \in \mathcal{U}$. That is, 
$$\mathcal{P}_u := \left\{\textbf{p}_{u} \in \mathbb{R}^{k_u} \middle| p_{u,i}\geq 0, i=1, \cdots, k_u , \sum^{k_u}_{i=1}p_{u,i}=1 \right\}.
$$ Then, let $\mathcal{P}:=\Pi_{u \in \mathcal{U}}\mathcal{P}_u$, the recommendation suggested by the NRS to all users is $\textbf{p}=\{\textbf{p}_u\}_{u \in \mathcal{U}} \in \mathcal{P}$.

In transportation, from a microscopic perspective, the probability $p_{u, i}$ can be interpreted as the expected volume generated by user $u$ along path $s_{u, i}$. This, in turn, contributes to the expected road flow (load) $x_e^r: \mathcal{P} \mapsto \mathbb{R}_{\ge 0}$ on road $e \in \mathcal{E}$ as below. 
\begin{equation*}
    x_e^r(\textbf{p})=\sum_{u \in \mathcal{U}}\sum_{s_{u, i} \in \mathcal{S}_u}p_{u, i}a_{es_{u, i}},
\end{equation*} where $a_{es_{u, i}}$ is an element of the road-path incidence matrix $A_{|\mathcal{E}|\times |\Pi_{u \in \mathcal{U}} \mathcal{S}_u|}=[a_{es_{u, i}}]$, and is defined as follows.
$$ a_{es_{u, i}}=
\begin{cases}
    1 \qquad \text{if} \ e \in s_{u, i},\\
    0 \qquad \text{otherwise}.
\end{cases}
$$ Hence, a generalized travel cost $C_{u, i}: \mathcal{P} \mapsto \mathbb{R}_{+}$ for path $s_{u, i}$ can be formulated by summing the costs of all the roads along the path:
\begin{equation*}
    C_{u, i}(\textbf{p})=\sum_{e \in s_{u, i}}c_e(x_e^r(\textbf{p})).
\end{equation*} In this context, the expected cost evaluated by user $u$ is $F_u^r: \mathcal{P} \mapsto \mathbb{R}_{\ge 0}$, where
\begin{equation}
    F_u^r(\textbf{p}_{u},\textbf{p}_{-u}) = \sum^{k_u}_{i=1}p_{u,i}C_{u,i}(\textbf{p}_{u},\textbf{p}_{-u}).
\label{eq:F_u}
\end{equation}
Note that a recommendation $\textbf{p} \in \mathcal{P}$ can be interpreted as the strategy profile in a routing game between NRS users. Hence, the routing game addressed by the NRS can be defined as $\Gamma^r=\langle \mathscr{R}, \mathscr{F}^r \rangle$, where $\mathscr{F}^r=(F_u^r)_{u \in \mathcal{U}}$ represents the costs evaluated by users.
However, human users may choose not to follow the NRS recommendation if they find a better alternative. Therefore, to ensure user adherence that leads to a guaranteed performance over the network, the NRS must suggest a recommendation \(\textbf{p} \in \mathcal{P}\), where \(\textbf{p}_u \in \mathcal{P}_u\) is preferred by user \(u\) given the recommendations to other users \(\textbf{p}_{-u} \in \Pi_{u^{\prime} \in \mathcal{U} \setminus \{u\}} \mathcal{P}_{u^{\prime}}\), for all \(u \in \mathcal{U}\). That is, given the recommendations \(\textbf{p}_{-u}\) to users other than \(u\), user \(u\) has no incentive to unilaterally deviate from the recommended \(\textbf{p}_u\). This coincides with the concept of user equilibrium (UE), which is defined as follows:
\begin{definition}[User Equilibrium Recommendation]
    Considering a routing game addressed by the NRS defined as $\Gamma^r=\langle\mathscr{R}, \mathscr{F}^r\rangle$, a mixed strategy profile $\textbf{p} \in \mathcal{P}$ for all the users is called a user equilibrium recommendation if it satisfies:
    \begin{align}
    F_u^r(\textbf{p}_{u},\textbf{p}_{-u})-F_u^r(\textbf{p}^{\prime}_{u},\textbf{p}_{-u}) \leq 0 , \forall \ \textbf{p}^{\prime}_{u} \in \mathcal{P}_u, \forall u \in \mathcal{U}.
    \label{prob:RS}
    \end{align}
\label{def:NRS}
\vspace{-3mm}
\end{definition}
UE recommendation can be found by gradient descent-based method. Let $\text{proj}_{\mathcal{P}_u}$ represent the projection onto simplex $\mathcal{P}_u$ and $\rho \in \mathbb{R}$ denote the step size, problem \eqref{prob:RS} can be solved by finding a fixed point to:
\begin{equation}
    \textbf{P}_u^{*} = \text{proj}_{\mathcal{P}_u} \left[\textbf{p}_u^{*} - \rho \nabla_u F_u^r(\textbf{p}_u^{*}, \textbf{p}_{-u}^{*})\right], \ \forall u \in \mathcal{U}.
\label{eq:PGD}
\end{equation}

\subsubsection{Wardrop Equilibrium Recommendations (WE)}
In this subsection, we use the concept of Wardrop equilibrium (WE) as the foundation for the WE-based recommendations.

Recall that $\mathcal{U}$ represents the user set of the NRS, with their associated set of OD pairs, denoted as $\Theta_{\mathcal{U}}$. For each OD pair $\theta \in \Theta_{\mathcal{U}}$, the demand flow aggregated from users, $d_\theta = \sum_{u \in \mathcal{U}} \mathbf{1}_{\{\theta_u = \theta\}}$, must be routed from the corresponding origin to the desired destination. As for OD pair $\theta \in \Theta \setminus \Theta_{\mathcal{U}}, d_\theta = 0$. The set of feasible paths for each OD pair $\theta$ is $\mathcal{S}_\theta = \{s_{\theta, 1}, \ldots, s_{k_\theta}\}$. Then, let vector $\textbf{y}_\theta=\{y_{\theta, i}\}_{s_{\theta, i} \in \mathcal{S}_\theta} \in \mathbb{R}^{k_\theta}$ so that each element $y_{\theta, i}$ represents the expected flow generated by the users being recommended through path $s_{\theta, i}$, and needs to satisfy the constraints:
$\sum_{s_{\theta, i} \in \mathcal{S}_\theta} y_{\theta, i} = d_\theta$. That is, we can define
$$\mathcal{Y}_\theta := \left\{\textbf{y}_{\theta} \in \mathbb{R}^{k_\theta} \middle| y_{\theta,i}\geq 0, i=1, \cdots, k_\theta , \sum^{k_\theta}_{i=1}y_{\theta,i}=d_\theta \right\}.
$$ By denoting $\mathcal{Y}:= \Pi_{\theta \in \Theta} \mathcal{Y}_\theta$, the expected flow recommended by the NRS on all the paths $s_{\theta, i} \in \mathcal{S}_\theta, \theta \in \Theta$ is $\textbf{y}=\{\textbf{y}_\theta\}_{\theta \in \Theta} \in \mathcal{Y}$. Similar to the UE recommendation, $\textbf{y}$ also contributes to the expected road flow (load) $x_e^w: \mathcal{Y} \mapsto \mathbb{R}_{\ge 0}$ on road $e \in \mathcal{E}$ as below. 
\begin{equation*}
    x_e^w(\textbf{y})=\sum_{\theta \in \Theta}\sum_{s_{\theta, i} \in \mathcal{S}_\theta}y_{\theta, i} a^\prime_{es_{\theta, i}},
\end{equation*} where $a^\prime_{es_{\theta, i}}$ is an element of the road-path incidence matrix $A^\prime_{|\mathcal{E}|\times |\Pi_{\theta \in \Theta} \mathcal{S}_\theta|}=[a^\prime_{es_{\theta, i}}]$, and is defined as follows.
$$ a^\prime_{es_{\theta, i}}=
\begin{cases}
    1 \qquad \text{if} \ e \in s_{\theta, i},\\
    0 \qquad \text{otherwise}.
\end{cases}
$$ Then, the cost evaluated by user $u$ with OD pair $\theta=\theta_u$ for path $s_{\theta, i}$ is $F_{\theta, i}^w: \mathcal{Y} \mapsto \mathbb{R}_{\ge 0}$, where
\begin{equation*}
    F_{\theta, i}^w(\textbf{y}) = \sum_{e \in s_{\theta, i}} c_e\left(x_e^w(\textbf{y})\right).
\end{equation*}

In this context, the routing game addressed by the NRS can be defined as $\Gamma^w=\langle \mathscr{R}, \mathscr{F}^w \rangle$, where $\mathscr{F}^w=(F_{\theta, i}^w)_{s_{\theta, i} \in \mathcal{S}_\theta, \theta \in \Theta}$ represents the costs evaluated by users. Then, the WE-based recommendation is defined as the following.

\begin{definition}[Wardrop Equilibrium Recommendation]
    Consider a routing game addressed by the NRS, defined as $\Gamma^w=\langle\mathscr{R}, \mathscr{F}^w\rangle$.
    A feasible path flow and road load pair $(\textbf{y}, \textbf{x}^w)$ with $\textbf{y} \in \mathcal{Y}$ and $\textbf{x}^w = \{x_e^w(\textbf{y})\}_{e \in \mathcal{E}} \in \mathbb{R}_{\ge 0}^{|\mathcal{E}|}$ is called a Wardrop equilibrium recommendation if it satisfies: 
\begin{equation}
    \begin{aligned}
        F_{\theta, i}^w(\textbf{y}) \leq F_{\theta, j}^w(\textbf{y}) \ \text{when} \ \ y_{\theta, i}>0 &,\\
        \qquad \qquad \forall s_{\theta, i}, s_{\theta, j} \in \mathcal{S}_\theta, \ \forall \theta \in \Theta &.
    \end{aligned}
\end{equation}
\label{def:WE}
\vspace{-3mm}
\end{definition} In other words, the WE-based recommendation results in the minimal prevailing costs for all used paths. Then, according to Beckmann \cite{beckmann1956studies}, WE can be computed as the solution to the following optimization problem,
\begin{subequations}
    \begin{align}
        W(\textbf{d}): \ \min_{\textbf{y}, \textbf{x}^w} \ &\sum_{e \in \mathcal{E}} \int_{0}^{x_e^w} c_e(z) dz\\
        \text{s.t. } \ & \sum^{k_\theta}_{i=1}y_{\theta,i}=d_\theta, \forall \theta \in \Theta, \\
        & y_{\theta,i} \geq 0, \forall s_{\theta, i} \in \mathcal{S}_\theta, \forall \theta \in \Theta,\\
        & x_e^w(\textbf{y})=\sum_{\theta \in \Theta}\sum_{s_{\theta, i} \in \mathcal{S}_\theta}y_{\theta, i} a^\prime_{es_{\theta, i}}, \forall e \in \mathcal{E},
    \end{align}
\label{prob:WE_xy}
\end{subequations} and the corresponding WE recommendation pair is represented as $(\widehat{\textbf{y}}, \widehat{\textbf{x}}^w)$.

\subsubsection{Connection between UE and WE}

Correspondence can be observed between Definition \ref{def:NRS} and \ref{def:WE}, which assists us in the subsequent analysis when integrating the demand attack model into our PRADA framework. That is, user $u$ and the set of feasible paths $\mathcal{S}_u$ in UE correspond to OD pair $\theta$ and $\mathcal{S}_\theta$ in WE, the expected flow $x_e^r$ on the road $e$ correspond to the road load $x_e^w$, and the probability $p_{u, i}$ that user $u$ be recommended on the path $s_{u, i}$ in UE corresponds to the path flow $y_{\theta, i}$ in WE. Note that letting $d_{\theta}=1$ in WE leads to $\sum^{k_\theta}_{i=1}y_{\theta,i}=1$ that corresponds to $\textbf{p}_u \in \mathcal{P}_u$ for $\theta = \theta_u$.

Let ($\textbf{p}$, $\textbf{x}^r$) denote the UE pair, where $\textbf{p} \in \mathcal{P}$ and $\textbf{x}^r = \{x_e^r(\textbf{p})\}_{e \in \mathcal{E}} \in \mathbb{R}_{\ge 0}^{|\mathcal{E}|}$. As a result, the WE solution pair $(\widehat{\textbf{y}}, \widehat{\textbf{x}}^w)$ that corresponds to the ($\textbf{p}$, $\textbf{x}^r$) pair can be viewed as a feasible solution for the UE recommendation. More specifically, if ($\textbf{p}$, $\textbf{x}^r$) is a WE, then for all user $u \in \mathcal{U}$ in constraint \eqref{prob:RS}:
Since only paths with minimum cost are utilized, all the paths used by any given user have the same cost. That is, for $p_{u, i} > 0$, the cost $C_{u,i}(\textbf{p}_{u},\textbf{p}_{-u})$ should be the same for $u$. The overall expected cost $\sum^{k_u}_{i=1}p_{u,i}C_{u,i}(\textbf{p}_{u},\textbf{p}_{-u})$ is independent of the probability $p_{u, i}$ of $C_{u,i}(\textbf{p}_{u},\textbf{p}_{-u})$. Lastly, note that if ($\textbf{p}$, $\textbf{x}^r$) is an equilibrium, there is no incentive for a user $u$ to deviate to any other $\textbf{p}^\prime_u \in \mathcal{P}_u$.

\begin{proposition}
    A WE solution pair $(\widehat{\textbf{y}}, \widehat{\textbf{x}}^w)$ defined in Definition \ref{def:WE} that corresponds to the UE pair ($\textbf{p}$, $\textbf{x}^r$) is a feasible solution for the UE recommendation defined in Definition \ref{def:NRS}.
\end{proposition}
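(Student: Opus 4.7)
The plan is to translate the Wardrop equilibrium characterization in Definition~\ref{def:WE} into the user-equilibrium inequality in Definition~\ref{def:NRS} by exploiting the explicit correspondence that the preceding paragraph establishes between the two formulations. Concretely, I would fix a WE pair $(\widehat{\textbf{y}}, \widehat{\textbf{x}}^w)$ in which every OD pair $\theta = \theta_u$ of a single NRS user carries unit demand $d_\theta = 1$, and identify $p_{u,i} := \widehat{y}_{\theta_u, i}$ and $x_e^r(\textbf{p}) := \widehat{x}_e^w$. Under this identification the simplex constraint $\sum_i p_{u,i} = 1$ matches the demand constraint, the road-load maps coincide, and the path-cost functionals coincide, i.e.\ $C_{u,i}(\textbf{p}) = F_{\theta_u, i}^w(\widehat{\textbf{y}})$ for every feasible $s_{u,i} \in \mathcal{S}_u$.

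From there, the first key step is to read off from the WE condition that for each user $u$, every path $s_{u,i}$ with $p_{u,i} > 0$ attains the minimum path cost; call this common value $C_u^{\star} := \min_{j} C_{u,j}(\textbf{p})$. Substituting into \eqref{eq:F_u} and pulling the constant out of the simplex sum gives $F_u^r(\textbf{p}_u, \textbf{p}_{-u}) = C_u^{\star} \sum_i p_{u,i} = C_u^{\star}$. The second step is the unilateral-deviation comparison: for any alternative $\textbf{p}_u' \in \mathcal{P}_u$, I would expand
\begin{equation*}
F_u^r(\textbf{p}_u', \textbf{p}_{-u}) = \sum_{i=1}^{k_u} p_{u,i}' \, C_{u,i}(\textbf{p}_u', \textbf{p}_{-u}),
\end{equation*}
and lower-bound each $C_{u,i}(\textbf{p}_u', \textbf{p}_{-u})$ by the minimum achievable cost, which (under the Wardrop interpretation embedded in the correspondence, where a single unit of demand is treated as a negligible flow perturbation relative to $\textbf{p}_{-u}$) is still $C_u^{\star}$. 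Then the convex combination satisfies $F_u^r(\textbf{p}_u', \textbf{p}_{-u}) \geq C_u^{\star} = F_u^r(\textbf{p}_u, \textbf{p}_{-u})$, which is exactly the inequality \eqref{prob:RS}. Since $u$ was arbitrary, the feasibility claim follows.

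The main obstacle I expect is precisely justifying that a unilateral change of the probabilities $\textbf{p}_u$ does not disturb the minimum path cost enough to break the bound in the deviation step. The WE characterization is stated on aggregate flows $\textbf{y}$, so one must be careful that the single-user interpretation $d_\theta = 1$ is consistent with the non-atomic convention under which WE is derived from Beckmann's program \eqref{prob:WE_xy}. I would handle this by noting that the costs $C_{u,i}$ are evaluated through the expected load $x_e^r$, which is linear in $\textbf{p}_u$, and that the convexity of the Beckmann potential makes the minimum-cost property stable under any feasible reweighting; this reduces the potential subtlety to a direct consequence of the WE variational inequality rather than a separate perturbation argument. The remaining steps are bookkeeping via the one-to-one mapping between $(p_{u,i}, x_e^r, C_{u,i})$ and $(y_{\theta,i}, x_e^w, F_{\theta,i}^w)$ already laid out in the paragraph immediately preceding the statement.
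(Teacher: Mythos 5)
Your proposal is correct and follows essentially the same route as the paper: the paper's proof is a one-line pointer to the preceding discussion, which makes exactly your argument --- under the WE-to-UE correspondence all used paths for a user share the common minimum cost, so $F_u^r(\textbf{p}_u,\textbf{p}_{-u})$ equals that minimum and no unilateral deviation in \eqref{prob:RS} can improve it. You are in fact more careful than the paper on the one delicate point (that a unilateral change of $\textbf{p}_u$ perturbs the loads entering $C_{u,i}$), which the paper's discussion silently treats under the same non-atomic/negligible-contribution convention you invoke.
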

\begin{proof}
The proof follows from the discussion above.
\end{proof}

Similarly, in the case where $d_{\theta}=\sum_{u \in \mathcal{U}} \mathbf{1}_{\{\theta_u=\theta\}} > 1$, the NRS can recommend a mixed strategy $p_{u, i}$ over the feasible path $s_{u, i} \in \mathcal{S}_u=\mathcal{S}_{\theta_u}$, with each $p_{u, i}=\widehat{y}_{\theta, i}/d_{\theta}$, where $\widehat{y}_{\theta, i}$ comes from the WE flow-load pair $(\widehat{\textbf{y}}, \widehat{\textbf{x}}^w)$ in which only paths with minimum cost are utilized for each OD pair. Following this recommendation, the recommended path flow $\textbf{y}^r:=\{y^r_{\theta, i}\}_{\theta \in \Theta, s_{\theta, i} \in \mathcal{S}_{\theta}}$, with each $y^r_{\theta, i} = \sum_{u \in \mathcal{U}} p_{u, i} \mathbf{1}_{\{\theta_u=\theta\}}$, is the same as the WE path flow $\widehat{\textbf{y}}$.

Since a feasible UE recommendation can be computed from the WE solution pair$(\widehat{\textbf{y}}, \widehat{\textbf{x}}^w)$, we have the following remarks.
\begin{remark}
    The recommended road flow load $\textbf{x}^r$ provided by the UE recommendation coincides with $\widehat{\textbf{x}}^w$ in WE.
\end{remark}
It is worth noting that, according to problem \eqref{prob:WE_xy}, the WE road flow load $\widehat{\textbf{x}}^w$ is uniquely determined if the cost function $c_e(\cdot)$ on each road $e \in \mathcal{E}$ is strictly increasing in the corresponding road flow load $x_e$ \cite{still2018lectures}.
\begin{remark}
    Under the assumption that the cost function $c_e(\cdot)$ on each road $e \in \mathcal{E}$ is strictly increasing in the corresponding road flow load $x_e$, the UE road flow load $\textbf{x}^r$ is also uniquely determined and is equivalent to $\widehat{\textbf{x}}^w$. Hence, we can denote $\textbf{x} \in \mathbb{R}^{|\mathcal{E}|}_{\ge 0}$ and have $\textbf{x} = \widehat{\textbf{x}}^w = \textbf{x}^r$.
\label{remark:x_eq}
\end{remark}

\subsection{Stackelberg Game for Risk Assessment} \label{sec:md_problem}
To assess the risks of different attack methods and types, we adopt a Stackelberg game approach to capture the interaction between the attacker (AT) and the NRS. In this setup, the attacker acts as the leader, deciding on the misinformed demands, while the NRS, as the follower, generates recommendations based on the provided information. 

\subsubsection{Strategic Attackers with Local-targeted Objectives}
The subsequent analyses focus on scenarios involving strategic attackers with local-targeted attack objectives. Specifically, the attacker intends to have a desired level of expected flow load caused by genuine NRS users on the target road $e^{\prime} \in \mathcal{E}$. That is, the attacker aims to make $x_{e^\prime}^r(\textbf{p}) = \sum_{u \in \mathcal{U}}\sum_{s_{u, i} \in \mathcal{S}_u}p_{u, i}a_{e^\prime s_{u, i}}$ in UE recommendation, which is equivalent to $x^w_{e^\prime}(\widehat{\textbf{y}})=\sum_{\theta \in \Theta}\sum_{s_{\theta, i} \in \mathcal{S}_\theta}\widehat{y}_{\theta, i}a^\prime_{e^\prime s_{\theta, i}}$ in WE according to Remark \ref{remark:x_eq}, achieve a desired level $\gamma \in \mathbb{R}_{\ge 0}$.


Consider the situation where a Sybil-based attacker generates non-existent demands $\textbf{d}^a \in \mathcal{D}$, where $\mathcal{D}=\mathbb{Z}_{\ge 0}^{|\Theta|}$ using Sybil (fake) users. Then, the NRS will need to consider an aggregated demand of $\textbf{d}^{\prime}=\textbf{d} + \textbf{d}^a$ when generating the recommendation. Note that for each OD pair $\theta$, the demand $d_\theta^{\prime}$ under attack consists of $d_\theta + d_\theta^a$. Without loss of generality, we can assume that only a proportion of $\frac{d_\theta}{d_\theta + d_\theta^a}$ of the WE expected path flow $\widehat{y}_{\theta, i}^{\prime}$ with respect to $W(\textbf{d}^{\prime})$ is caused by authentic users. Hence, we denote $\widehat{y}^u_{\theta, i}=\left(\frac{d_\theta}{d_\theta + d_\theta^a}\right)\widehat{y}_{\theta, i}^{\prime}$ and $\widehat{\textbf{y}}^u = \{\widehat{y}^u_{\theta, i}\}_{\theta \in \Theta, s_{\theta, i} \in \mathcal{S}_{\theta}}$ for path flow generated by true users. In this case, the attacker aims to make the flow load from genuine users $x^w_{e^\prime}(\widehat{\textbf{y}}^u) = \sum_{\theta \in \Theta}\sum_{s_{\theta, i} \in \mathcal{S}_\theta}\widehat{y}^u_{\theta, i}a^\prime_{e^\prime s_{\theta, i}}$ on the targeted road $e^\prime$ reach the desired level $\gamma$. To this end, the Stackelberg game between the attacker (AT) and the NRS can be defined as $\Gamma^s=\langle \text{AT}, \mathcal{D}, U_{AT}, (e^\prime, \gamma), \Gamma^w \rangle$, where $U_{AT}: \mathcal{D} \mapsto \mathbb{R}_{\ge 0}$ is the attacker's cost in terms of the resources spent in fabricating fake demands.  The leader-follower problem is formulated as follows, and can be solved by gradient descent-based algorithms \cite{boyd2004convex}.
\begin{subequations}
    \begin{align}
        \min_{\textbf{d}^a} \ & U_{AT}(\textbf{d}^a)=\sum_{\theta \in\Theta} d_\theta^a\\
        \text{s.t.} \ &\sum_{\theta \in \Theta}\sum_{s_{\theta, i} \in \mathcal{S}_\theta}\widehat{y}^u_{\theta, i}a^\prime_{e^\prime s_{\theta, i}}\geq \gamma, \\
        &(\widehat{\textbf{y}}, \widehat{\textbf{x}}^w) \in \argmin W(\textbf{d} + \textbf{d}^a), \\
        & d_\theta^a \geq 0, \forall \theta \in \Theta.
    \end{align}
\label{prob:RS_attack}
\end{subequations}
Lastly, note that the level $\gamma$ in problem \eqref{prob:RS_attack} can not be arbitrarily large, which leads us to the following.
\begin{remark}
    The edge load $x_e^w$ is bounded by the total demand of the true user $d_T=\sum_{\theta \in \Theta}d_\theta$. Hence, the attacker's desired level $\gamma$ is also upper-bounded by $d_T=\sum_{\theta \in \Theta}d_\theta$.
\end{remark}

\subsubsection{Sensitivity Analysis for Demand Attack} \label{sec:sensitivity}
Under the assumption that the cost function $c_e(\cdot)$ is continuous and increasing in $x_e$, a pair $(\textbf{y}, \textbf{x})$ is a minimizer of
$W(\textbf{d})$ if and only if it satisfies the following Karush–Kuhn–Tucker (KKT) conditions.
\begin{subequations}
    \begin{align}
        c_e(x_e)-\lambda_e =0, \ &\forall e \in \mathcal{E},\label{eq:c_lambda}\\
        -\nu_\theta + \sum_{e \in \mathcal{E}}\lambda_e a^\prime_{es_{\theta, i}} - \mu_{\theta, i}=0, \ &\forall s_{\theta, i} \in \mathcal{S}_\theta, \forall \theta \in \Theta, \label{eq:nu_lambda_mu}\\
        \mu_{\theta, i}y_{\theta, i}=0,\ &\forall s_{\theta, i} \in \mathcal{S}_\theta, \forall \theta \in \Theta \label{eq:mu_y},
    \end{align}
\label{eq:KKT}
\end{subequations} with Lagrangian multipliers $\nu_\theta \in \mathbb{R}_{\ge 0}, \forall \theta \in \Theta$, $\lambda_e \in \mathbb{R}_{\ge 0}, \forall e \in \mathcal{E}$, and $\mu_{\theta, i} \in \mathbb{R}_{\ge 0}, \forall s_{\theta, i} \in \mathcal{S}_\theta, \forall \theta \in \Theta$. Then, a pair $(\textbf{y}, \textbf{x})$ satisfying the constraints with multipliers $-\boldsymbol{\nu}=-(\nu_\theta)_{\theta\in \Theta}, \boldsymbol{\lambda}=(\lambda_e)_{e \in \mathcal{E}}, \boldsymbol{\mu}=(\mu_{\theta, i})_{s_{\theta, i} \in \mathcal{S}_\theta, \theta \in \Theta}$ also satisfies $$\nu_\theta=\sum_{e \in s_{\theta, i}}c_e(x_e)-\mu_{\theta, i}
\begin{cases}
    =\sum_{e \in s_{\theta, i}}c_e(x_e), \ y_{\theta, i} > 0, \\
    \leq \sum_{e \in s_{\theta, i}}c_e(x_e), \ y_{\theta, i} = 0, 
\end{cases}$$ which coincides with the definition of WE recommendation.


Then, with KKT conditions, we aim to examine how the WE pair $(\widehat{\textbf{y}}, \widehat{\textbf{x}}^w)$ can be influenced by changes in the demand $\textbf{d}$ according to \cite{still2018lectures}.

\vspace{-1mm}
\begin{proposition}
    Let the pair $(\textbf{y}, \textbf{x})$ with the corresponding multipliers $\boldsymbol{\nu}$ and $\boldsymbol{\mu}$ described in \eqref{eq:KKT} be a WE for demand $\textbf{d}$ and $(\textbf{y}', \textbf{x}')$ with corresponding multipliers $\boldsymbol{\nu}'$ and $\boldsymbol{\mu}'$ be a WE for demand $\textbf{d}'$. Then, $(\boldsymbol{\nu}'-\boldsymbol{\nu})^{T}(\textbf{d}'-\textbf{d}) \geq \boldsymbol{\mu}'^{T}\textbf{y}+\boldsymbol{\mu}^{T}\textbf{y}' \geq 0$.
\label{prop:diff_d}
\end{proposition}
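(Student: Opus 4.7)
The plan is to prove this via a direct computation that uses the KKT stationarity condition to rewrite inner products of the form $\boldsymbol{\nu}^T\textbf{d}$ in terms of $\boldsymbol{\lambda}, \boldsymbol{\mu}, \textbf{x}, \textbf{y}$, then to exploit complementary slackness together with monotonicity of the cost functions $c_e$. The underlying intuition is the standard monotonicity argument for WE/variational-inequality solutions: demand and its shadow price should move in the same direction.

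First, I would take the stationarity condition \eqref{eq:nu_lambda_mu} for the WE $(\textbf{y},\textbf{x})$ with multipliers $\boldsymbol{\nu},\boldsymbol{\lambda},\boldsymbol{\mu}$, multiply both sides by the primed path flow $y'_{\theta,i}$, and sum over all $s_{\theta,i}\in\mathcal{S}_\theta$ and $\theta\in\Theta$. Using the demand-feasibility constraint $\sum_i y'_{\theta,i}=d'_\theta$ and the load definition $x'_e=\sum_{\theta,i} y'_{\theta,i}a'_{e s_{\theta,i}}$, this yields the identity $\boldsymbol{\nu}^T\textbf{d}'=\boldsymbol{\lambda}^T\textbf{x}'-\boldsymbol{\mu}^T\textbf{y}'$. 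Swapping the roles of the primed and unprimed quantities gives $\boldsymbol{\nu}'^T\textbf{d}=\boldsymbol{\lambda}'^T\textbf{x}-\boldsymbol{\mu}'^T\textbf{y}$. Applying the same identity to the matched pairs $(\textbf{d},\textbf{y})$ and $(\textbf{d}',\textbf{y}')$, together with the complementary slackness condition \eqref{eq:mu_y} which kills the $\boldsymbol{\mu}^T\textbf{y}$ and $\boldsymbol{\mu}'^T\textbf{y}'$ terms, gives $\boldsymbol{\nu}^T\textbf{d}=\boldsymbol{\lambda}^T\textbf{x}$ and $\boldsymbol{\nu}'^T\textbf{d}'=\boldsymbol{\lambda}'^T\textbf{x}'$.

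Next, I would expand the target inner product as
\begin{equation*}
(\boldsymbol{\nu}'-\boldsymbol{\nu})^T(\textbf{d}'-\textbf{d}) \;=\; \boldsymbol{\nu}'^T\textbf{d}' - \boldsymbol{\nu}'^T\textbf{d} - \boldsymbol{\nu}^T\textbf{d}' + \boldsymbol{\nu}^T\textbf{d},
\end{equation*}
substitute the four identities from the previous step, and rearrange to obtain
\begin{equation*}
(\boldsymbol{\nu}'-\boldsymbol{\nu})^T(\textbf{d}'-\textbf{d}) \;=\; (\boldsymbol{\lambda}'-\boldsymbol{\lambda})^T(\textbf{x}'-\textbf{x}) + \boldsymbol{\mu}'^T\textbf{y} + \boldsymbol{\mu}^T\textbf{y}'.
\end{equation*}
By condition \eqref{eq:c_lambda}, $\lambda_e = c_e(x_e)$ and $\lambda'_e = c_e(x'_e)$, so monotonicity of each $c_e(\cdot)$ (assumed in the sensitivity setup of Section~\ref{sec:sensitivity}) forces $(\lambda'_e-\lambda_e)(x'_e-x_e)\geq 0$ termwise, hence the first term is non-negative. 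The two cross terms $\boldsymbol{\mu}'^T\textbf{y}$ and $\boldsymbol{\mu}^T\textbf{y}'$ are non-negative because every component of $\boldsymbol{\mu},\boldsymbol{\mu}',\textbf{y},\textbf{y}'$ is non-negative. Dropping the non-negative $(\boldsymbol{\lambda}'-\boldsymbol{\lambda})^T(\textbf{x}'-\textbf{x})$ term yields the claimed chain of inequalities.

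The only real obstacle is the bookkeeping in the first step: one has to be careful that the telescoping via the road-path incidence matrix correctly converts path-level sums into edge-level sums and demand-level sums, and that complementary slackness is applied to the matched (not the cross) pair so the right two terms survive with the correct sign. Once that algebraic identity is established, the inequality is immediate from the monotonicity of $c_e$ and the sign constraints on $\boldsymbol{\mu}$ and $\textbf{y}$.
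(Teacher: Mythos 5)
Your proof is correct and uses exactly the same ingredients as the paper's: monotonicity of $c_e$ via \eqref{eq:c_lambda}, the stationarity condition \eqref{eq:nu_lambda_mu} to trade edge costs for $\nu_\theta+\mu_{\theta,i}$, complementary slackness \eqref{eq:mu_y}, and the sign constraints on $\boldsymbol{\mu},\textbf{y}$. The only (cosmetic) difference is that you first package everything into the exact identity $(\boldsymbol{\nu}'-\boldsymbol{\nu})^{T}(\textbf{d}'-\textbf{d})=(\boldsymbol{\lambda}'-\boldsymbol{\lambda})^{T}(\textbf{x}'-\textbf{x})+\boldsymbol{\mu}'^{T}\textbf{y}+\boldsymbol{\mu}^{T}\textbf{y}'$ and then drop the non-negative first term, whereas the paper starts from the monotonicity inequality and substitutes; the two are the same computation rearranged.
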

 
\begin{proof}
    Under the assumption that the cost function $c_e(\cdot)$ is continuous and increasing in $x_e$, which indicates that $[c_e(x^\prime_e)-c_e(x_e)](x^\prime_e-x_e) \geq 0, \forall e \in \mathcal{E}$, then with $x_e^\prime = x_e^w(\textbf{y}^\prime)=\sum_{\theta \in \Theta}\sum_{s_{\theta, i} \in \mathcal{S}_\theta}y_{\theta, i}^\prime a^\prime_{es_{\theta, i}}$ and $x_e = x_e^w(\textbf{y})=\sum_{\theta \in \Theta}\sum_{s_{\theta, i} \in \mathcal{S}_\theta}y_{\theta, i} a^\prime_{es_{\theta, i}}$, we have the following: $$\sum_{\theta \in \Theta}\sum_{s_{\theta, i} \in \mathcal{S}_\theta} \sum_{e \in \mathcal{E}} [c_e(x^\prime_e)-c_e(x_e)] a^\prime_{es_{\theta, i}}(y_{\theta, i}^\prime-y_{\theta, i}) \geq 0.$$ Note that the KKT conditions in \eqref{eq:c_lambda} give us $c_e(x_e) = \lambda_e, \forall e \in \mathcal{E}$. With \eqref{eq:nu_lambda_mu}, the above inequality becomes: 
    $$\sum_{\theta \in \Theta}\sum_{s_{\theta, i} \in \mathcal{S}_\theta} [(\nu_\theta^\prime+\mu_{\theta, i}^\prime) - (\nu_\theta+\mu_{\theta, i})] (y_{\theta, i}^\prime-y_{\theta, i}) \geq 0.$$ By \eqref{eq:mu_y} and $\sum_{s_{\theta, i}\in \mathcal{S}_\theta}y_{\theta,i}=d_\theta$, we have the following:
    $$\sum_{\theta \in \Theta} (\nu_\theta^\prime-\nu_\theta)(d_{\theta}^\prime-d_{\theta}) \geq \sum_{\theta \in \Theta}\sum_{s_{\theta, i} \in \mathcal{S}_\theta} (\mu_{\theta, i}^\prime y_{\theta, i}+\mu_{\theta, i} y_{\theta, i}^\prime).$$ Note that $\mu_{\theta, i}^\prime, y_{\theta, i}, \mu_{\theta, i}, y_{\theta, i}^\prime \geq 0$, we complete the proof.
\end{proof} 
The result of Proposition \ref{prop:diff_d} can also be written as
$$\bigg[\sum_{e \in s_{\theta, i}}c_e(x_e')-c_e(x_e)\bigg](d_\theta'-d_\theta)\geq 0, \forall s_{\theta, i} \in \mathcal{S}_\theta$$ with $y_{\theta, i}', y_{\theta, i}>0$.
The Proposition \ref{prop:diff_d} states that if one demand $d_\theta$ is increased by fake users, with other demands remaining the same, then the equilibrium cost $\nu_\theta$ perceived by the NRS for the user $u$ with OD pair $\theta_u=\theta$ is also increased. 
\begin{proposition}
    For $W(\textbf{d})$ with demand $\textbf{d}$, let $\textbf{x}$ be a WE corresponds to cost $c_e(\cdot)$ and $\textbf{x}'$ be a WE corresponds to cost $c_e'(\cdot)$, then $\left[c_e'(x_e)-c_e(x_e)\right](x_e'-x_e) \leq 0$ and $\left[c_e'(x_e')-c_e(x_e')\right](x_e'-x_e) \leq 0$.
\label{prop:diff_c}
\begin{proof}
    Under the assumption that the cost functions $c_e(\cdot)$ and $c_e^\prime(\cdot)$ are continuous and increasing in $x_e$ and $x_e^\prime$, respectively, with the optimality conditions for $\textbf{x}^\prime$ be a WE corresponds to cost function $c_e^\prime(\cdot)$ and $\textbf{x}$ be a WE corresponds to $c_e(\cdot)$, we have the following inequalities:
    \begin{subequations}
        \begin{align}
            \left[c_e(x_e^\prime)-c_e(x_e)\right](x_e^\prime-x_e) &\geq 0 \label{eq:mono_c}\\
            \left[c_e^\prime(x_e^\prime)-c_e^\prime(x_e)\right](x_e^\prime-x_e) &\geq 0 \label{eq:mono_c_prime}\\
            c_e(x_e)(x_e^\prime - x_e) &\geq 0\label{eq:opt_}\\
            c_e^\prime(x_e^\prime)(x_e - x_e^\prime) &\geq 0 \label{eq:opt_prime}
        \end{align}
    \end{subequations}
    Then, summing up \eqref{eq:mono_c}, \eqref{eq:opt_}, and \eqref{eq:opt_prime} gives us $\left[c_e(x_e')-c_e'(x_e')\right](x_e'-x_e) \geq 0$, while adding \eqref{eq:mono_c_prime}, \eqref{eq:opt_}, and \eqref{eq:opt_prime} leads us to $\left[c_e(x_e)-c_e'(x_e)\right](x_e'-x_e) \geq 0$. We complete the proof.
\end{proof}
\end{proposition}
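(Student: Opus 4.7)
The plan is to mirror the algebraic scheme used in the proof of Proposition \ref{prop:diff_d} but keep every inequality at the per-edge level: assemble per-edge monotonicity of both cost functions together with per-edge optimality conditions at the two WEs, and combine them so that all intermediate terms cancel and only the desired expressions remain.

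First I would invoke the assumed monotonicity of $c_e(\cdot)$ and $c_e'(\cdot)$ in the scalar load $x_e$ to write the two per-edge inequalities $[c_e(x_e') - c_e(x_e)](x_e' - x_e) \geq 0$ and $[c_e'(x_e') - c_e'(x_e)](x_e' - x_e) \geq 0$. Next I would aim for the per-edge optimality conditions $c_e(x_e)(x_e' - x_e) \geq 0$ and $c_e'(x_e')(x_e - x_e') \geq 0$, drawing on the KKT system \eqref{eq:KKT} at each WE: stationarity gives $c_e(x_e) = \lambda_e \geq 0$ on every edge, and feasibility of $(\textbf{y}', \textbf{x}')$ for the same demand $\textbf{d}$ makes $x'_e - x_e$ a candidate perturbation direction. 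Granting these four per-edge inequalities, adding the first monotonicity inequality to the two optimality inequalities produces $[c_e(x_e') - c_e'(x_e')](x_e' - x_e) \geq 0$, which is the second claim after negation; adding the second monotonicity inequality instead yields $[c_e(x_e) - c_e'(x_e)](x_e' - x_e) \geq 0$, which is the first claim.

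The hard part will be justifying the per-edge form of the optimality conditions. The variational-inequality characterization of WE obtained from the Beckmann program \eqref{prob:WE_xy} naturally produces only the aggregated statement $\sum_{e \in \mathcal{E}} c_e(x_e)(x_e' - x_e) \geq 0$, because feasible directions in path-flow space are constrained by demand conservation at each OD pair and generically have support on several edges at once. To push this to the per-edge form I would try a path-exchange argument: whenever two paths $s_{\theta,i}$ and $s_{\theta,j}$ for the same OD pair $\theta$ differ on exactly one edge $e$, trading flow between them yields a feasible perturbation supported on $e$ alone, and then $\lambda_e = c_e(x_e) \geq 0$ furnishes the per-edge inequality. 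In topologies lacking such single-edge exchanges the per-edge claim may genuinely fail: in a two-path network where the only alternatives for a given OD pair share no edges, simultaneously lowering $c_e$ on one edge and raising $c_{e'}$ on another edge of the same path can drive that path's flow down, dragging $x_e$ down even though $c_e$ decreased, and thereby violating $[c_e'(x_e) - c_e(x_e)](x'_e - x_e) \leq 0$ on edge $e$. Should the per-edge derivation resist these attempts, the natural fallback is to sum the four inequalities over $e \in \mathcal{E}$, which immediately yields the aggregated counterparts directly from the Beckmann variational inequality.
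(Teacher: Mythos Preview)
Your approach is essentially identical to the paper's: the same four per-edge inequalities (two monotonicity, two optimality) combined in exactly the same way. The concern you raise about justifying the per-edge optimality conditions $c_e(x_e)(x_e'-x_e)\geq 0$ and $c_e'(x_e')(x_e-x_e')\geq 0$ is legitimate, but the paper simply asserts them as consequences of the optimality conditions without the finer path-exchange or aggregation argument you sketch, so your proposal already matches (and in fact interrogates more carefully) what the paper does.
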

That is, Proposition \ref{prop:diff_c} demonstrates that an increasing cost on a road $e \in \mathcal{E}$ will cause the equilibrium load $x_e$ on that road to decrease. This reduced load can be interpreted as a redistribution to alternative feasible paths. Thus, the attacker can achieve the desired flow load level $\gamma$ on the target road $e^\prime$ by redistributing the load there. Alongside Proposition \ref{prop:diff_d}, this can be accomplished by strategically increasing the perceived demands (by adding non-existent ones) on certain roads, thereby raising their costs and leading to redistribution.

\subsection{Impact Metrics for Risk Reports} \label{sec:metrics}
In this subsection, we introduce two metrics as the outcomes of our PRADA framework. 
Let $\textbf{p}$ be the recommendation to all the users without attack and $\textbf{p}^\prime$ is the one under attack.

\subsubsection{Local-targeted Impact}
We define the \textit{targeted impact metric (TI)} as the difference in traffic flow on each road with and without the demand attack, divided by the flow without the attack. Specifically, for each road $e \in \mathcal{E}$, the measure $TI_{e}$ is given by:
\begin{equation}
TI_e = \frac{|x_e^r(\textbf{p}^\prime) - x_e^r(\textbf{p})|}{x_e^r(\textbf{p})},  
    \label{eq:ti}
\end{equation} which can assists in measuring the percentage change in traffic flow on specific or targeted road affected by the demand attack. A larger value of $TI_e$ indicates the road $e$ is influenced more, often implying higher risk under the attack.

\subsubsection{Network-wide Impact}
Given the metrics $TI_{e}, \forall e \in \mathcal{E}$, we define the \textit{network impact metric (NI)} as the mean of $TI_{e}$ across all the roads/edges within the network. The measure $NI$ is as follows:
\begin{equation}
NI = \frac{1}{|\mathcal{E}|}\sum_{e \in \mathcal{E}} TI_{e}=\frac{1}{|\mathcal{E}|}\sum_{e \in \mathcal{E}} \frac{|x_e^r(\textbf{p}^\prime) - x_e^r(\textbf{p})|}{x_e^r(\textbf{p})}. 
    \label{eq:ni}
\end{equation} The metric $NI$ allows us to evaluate the percentage change in traffic flow across the entire network. It is important to note that if the demand attack primarily affects traffic flow on roads within a small area, as indicated by the $TI_{e}$ values for those roads, this localized impact will be averaged out when considering the network-wide impact.

\section{Risk Mitigation}
In this section, we aim to explore an effective mechanism to mitigate the impact of misinformed demand attacks. 

\subsection{Mitigation through User Trust}

Denote $\textbf{p}_u^o = \{p_{u, i}^o\}_{s_{u, i} \in \mathcal{S}_u} \in \mathcal{P}_u$ as the UE recommendation defined in Definition \ref{def:NRS} for user $u$ without attack (under normal traffic condition), which can be obtained from previous experience of requesting recommendation for the same OD pair. Let \( T_u \in \mathbb{R}_{\ge 0} \) for all \( u \in \mathcal{U} \) represent the \textit{trust score}, which quantifies the degree to which users trust that some malicious entities do not manipulate the recommendations provided by the NRS. The trust score then leads to the following trust constraint, implying that the user trusts the recommendation $\textbf{p}_u \in \mathcal{P}_u$ because its deviation from the user's previous experience is within an acceptable range (i.e., $T_u$). 
\begin{definition}[Trust Constraint (TC)]
    Consider an NRS component, denoted as $\mathscr{R}$. A recommended mixed strategy $\textbf{p}_u \in \mathcal{P}_u$ for a user $u$ is said to satisfy the trust constraint if the distance (in terms of Kullback–Leibler (KL) divergence) between the currently and previous recommended strategy $\textbf{p}_u^o \in \mathcal{P}_u$ is less than the trust score $T_u \in \mathbb{R}_{\ge 0}:$
    \begin{equation} D(\textbf{p}_u||\textbf{p}_u^o)=\sum^{k_u}_{i=1} p_{u,i} \log \left(\frac{p_{u,i}}{p_{u,i}^o}\right) \leq T_u.
    \label{eq:TC}
    \end{equation}
\label{def:IR}
\vspace{-3mm}
\end{definition}
A higher trust score indicates greater user trust in the current integrity of the NRS, as its associated trust constraint \eqref{eq:TC} forms a larger \textit{trust region}. This means the user is more willing to tolerate larger differences between current and previous recommendations, believing that such variations are due to changes in traffic conditions rather than malicious demand manipulation. In contrast, $T_u=0$ indicates a lack of trust in the current NRS, leading the user to follow only recommendations that match their past experiences.

Moreover, let $\textbf{p}_u$ be the recommendation to user $u$ without attack and $\textbf{p}_u^\prime$ is the one manipulated by the attacker. If each element $p_{u, i}^\prime = p_{u, i} + \epsilon_{u, i}$, where $\epsilon_{u, i}$ is a small perturbation due to demand attacks, we have the following sensitivity property for the manipulated recommendation using first-order Taylor expansion.
\begin{equation*}
    D(\textbf{p}_u^\prime||\textbf{p}_u^o) - D(\textbf{p}_u||\textbf{p}_u^o) \approx  \sum_{i=1}^{k_u} \epsilon_{u, i} \left(\log \frac{p_{u, i}}{p^o_{u, i}} + 1\right).
    \label{eq:TC_prop}
\end{equation*}
From the attacker's point of view, in order to fulfill TC, $D(\textbf{p}_u^\prime||\textbf{p}_u^o) \leq T_u$, so that user $u$ is still willing follow the manipulated recommendation $\textbf{p}_u^\prime$, the perturbations $\epsilon_{u, i}, \forall s_{u, i} \in \mathcal{S}_u$ must satisfy:
\begin{equation*}
    D(\textbf{p}_u||\textbf{p}_u^o) + \sum_{i=1}^{k_u} \epsilon_{u, i} \left(\log \frac{p_{u, i}}{p^o_{u, i}} + 1\right) \leq T_u, 
\end{equation*} which indicates that the trust score $T_u$ bounds the total perturbation to the probabilities $p_{u, i}$ on feasible paths $s_{u, i} \in \mathcal{S}_u$.

%
%

\subsection{Trust Mechanism for NRS}

Since TC bounds and mitigates the severity of manipulation caused by demand attacks, it is reasonable to incorporate such a user trust mechanism into the navigation recommendation process. In this context, users are either learned or warned to follow only the recommendations that satisfy their TC. Then, with Definition \ref{def:NRS} and \ref{def:IR}, the NRS must identify feasible recommendations that can be trusted by all users, ensuring their participation and adherence to the recommended strategies. This leads to the following definition.
\begin{definition}[Trusted Recommendation]
    Considering a routing game addressed by the NRS defined as $\Gamma^r=\langle\mathscr{R}, \mathscr{F}^r\rangle$, a trusted recommendation profile to all users $\textbf{p} \in \mathcal{P}$ needs to satisfy:
\begin{subequations}
  \begin{align}
     F_u^r(\textbf{p}_{u},\textbf{p}_{-u})-F_u^r(\textbf{p}^{\prime}_{u},\textbf{p}_{-u}) &\leq 0 , \forall \ \textbf{p}^{\prime}_{u} \in \mathcal{P}_u, \forall u \in \mathcal{U}, \label{eq:IC_cons}\\
     D(\textbf{p}_u||\textbf{p}_u^o)-T_u &\leq 0, \ \forall u \in \mathcal{U}, \label{eq:IR_cons}
  \end{align}
\label{prob:RS2}
\end{subequations} 
\label{def:NRS2}
\vspace{-5mm}
\end{definition}
Incorporating trust constraints ensures that the current recommendation provided by the NRS cannot deviate significantly from the previous one for the same OD pair, based on the assumption that traffic conditions usually evolve smoothly. Consequently, if there is a sudden change in demand caused by malicious entities, the recommendation will stay relatively aligned with past recommendations in normal circumstances.


The NRS's problem of finding trusted recommended mixed strategies for all users can also be interpreted using a non-cooperative game, defined as $\Tilde{\Gamma}^r:=(\mathscr{R}, \mathscr{F}^r, (T_u)_{u \in \mathcal{U}})$, where $\mathscr{F}^r=(F_u^r)_{u \in \mathcal{U}}$ and $F_u^r$ is expressed in \eqref{eq:F_u}. Each user $u \in \mathcal{U}$ of the NRS is a player of the game $\Tilde{\Gamma}^r$. User $u$ aims to minimize his/her own expected cost $F_u^r$ by deciding a mixed strategy $\textbf{p}_u \in \mathcal{P}_u$ over feasible path choice set $\mathcal{S}_u$ given other users' strategies $\textbf{p}_{-u}$, under the trust constraint that $\textbf{p}_u$ cannot deviate too much from previous experience $\textbf{p}_u^o$. That is, for all user $u \in \mathcal{U}$ in $\Tilde{\Gamma}^r$, given other users' strategies $\textbf{p}_{-u}$, 
\begin{equation}
\begin{aligned}
    \text{OP}_u: \min_{\textbf{p}_{u} \in \mathcal{P}_u} \ &F_u^r(\textbf{p}_{u}, \textbf{p}_{-u})\\
    \text{s.t. } 
    & D(\textbf{p}_u||\textbf{p}_u^o))-T_u \leq 0.
\end{aligned}
\label{eq:OP_u} 
\end{equation}

Then, by denoting $C^\prime_{u, i}(\textbf{p})=\sum_{e \in s_{u, i}} t_e \big[1+\alpha\big(\frac{x_e^r(\textbf{p})}{k_e}\big)^\beta + \beta x_e^u(\textbf{p}_u) \frac{\alpha}{k_e}\left(\frac{x_e^r(\textbf{p})}{k_e}\right)^{\beta-1}\big]$, we have the following proposition.
\begin{proposition}
    Consider the problem defined in \eqref{prob:RS2}. Under the conditions that for all $u \in \mathcal{U}$, the expected cost $F_u$ is continuously differentiable in $\textbf{p} \in \mathcal{P}$ and convex in $\textbf{p}_u \in \mathcal{P}_u$, and that the trust constraint is active, the trusted recommendation $\textbf{p}^*$ is as follows: $\forall u \in \mathcal{U}$,
    \begin{equation}
    p_{u, i}^* =  \frac{p_{u, i}^o \exp\left(\frac{-C^\prime_{u, i}(\textbf{p})}{\lambda_u}-1\right)}{\mu_{u, i}^\prime}, \forall s_{u, i} \in \mathcal{S}_u,
\label{eq:p_miti}
\end{equation} where $\mu_{u}^\prime =\sum_{i=1}^{k_u} p_{u, i}^*$ is the normalization term and $\lambda_u \in \mathbb{R}_{+}$ is the Lagrange multiplier for the trust constraint.
\label{prop:miti}
\vspace{-3mm}
\end{proposition}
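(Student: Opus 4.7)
My approach is to apply KKT conditions directly to user $u$'s convex subproblem $\text{OP}_u$ in \eqref{eq:OP_u}, which, by Definition \ref{def:NRS2}, the trusted recommendation must solve simultaneously for every $u \in \mathcal{U}$. The overall argument has two substantive pieces: identifying the gradient $\nabla_{\textbf{p}_u} F_u^r$ with the quantity $C'_{u,i}$ defined just before the statement, and then solving the resulting stationarity equation, which is a standard Gibbs/Boltzmann exercise thanks to the KL-divergence structure of the trust constraint.

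The first step is to compute $\partial F_u^r / \partial p_{u,i}$ in closed form. I would rewrite the expected cost in the load-based form
$$F_u^r(\textbf{p}_u,\textbf{p}_{-u}) = \sum_{j=1}^{k_u} p_{u,j}\!\sum_{e\in s_{u,j}} c_e(x_e^r(\textbf{p})) = \sum_{e\in\mathcal{E}} c_e(x_e^r(\textbf{p}))\, x_e^u(\textbf{p}_u),$$
where $x_e^u(\textbf{p}_u) := \sum_j p_{u,j}\, a_{e s_{u,j}}$ is user $u$'s own expected contribution to the load on edge $e$. Using $\partial x_e^r/\partial p_{u,i} = \partial x_e^u/\partial p_{u,i} = a_{e s_{u,i}}$ and the product rule gives
$$\frac{\partial F_u^r}{\partial p_{u,i}} = \sum_{e\in s_{u,i}}\bigl[c_e(x_e^r(\textbf{p})) + c_e'(x_e^r(\textbf{p}))\, x_e^u(\textbf{p}_u)\bigr],$$
and substituting the BPR form $c_e(x_e)=t_e[1+\alpha(x_e/k_e)^\beta]$ and its derivative reproduces exactly $C'_{u,i}(\textbf{p})$.

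The second step is the Lagrangian/KKT analysis. I would form the Lagrangian of $\text{OP}_u$ using multiplier $\lambda_u\ge 0$ for the trust constraint and $\nu_u\in\mathbb{R}$ for the simplex equality $\sum_i p_{u,i}=1$, assuming an interior minimizer so that nonnegativity multipliers vanish. Stationarity reads
$$C'_{u,i}(\textbf{p}) + \lambda_u\!\left(\log\frac{p_{u,i}}{p_{u,i}^o}+1\right) + \nu_u = 0.$$
Since the trust constraint is assumed active, $\lambda_u>0$, and I can solve algebraically for
$$p_{u,i}^{*} = p_{u,i}^{o}\,\exp\!\left(-\frac{\nu_u}{\lambda_u}\right)\exp\!\left(-\frac{C'_{u,i}(\textbf{p})}{\lambda_u}-1\right).$$
The $p_{u,i}$-independent prefactor $\exp(-\nu_u/\lambda_u)$ is precisely the reciprocal of the normalization $\mu_u'$ required by $\sum_i p_{u,i}^{*}=1$, which recovers the closed form in \eqref{eq:p_miti}. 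Convexity of $F_u^r$ in $\textbf{p}_u$ together with strict convexity of $D(\cdot\,\|\,\textbf{p}_u^o)$ makes $\text{OP}_u$ a convex program on a convex set, so KKT is both necessary and sufficient; writing the fixed-point system across all $u\in\mathcal{U}$ (with the coupling $\textbf{p}_{-u}$ entering only through $x_e^r$) yields the trusted profile $\textbf{p}^{*}$.

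The main obstacle is the first step: unlike the Wardrop-style cost $C_{u,i}$ used in the UE definition, the partial derivative $\partial F_u^r/\partial p_{u,i}$ also picks up a ``marginal-congestion'' term $c_e'(x_e^r)\,x_e^u$ because user $u$'s own probability mass changes the load experienced on each edge of path $s_{u,i}$. Recognizing that this term is precisely the second summand in $C'_{u,i}$ under the BPR parameterization is the crux of the derivation; once that identification is made, the rest of the proof is the familiar maximum-entropy calculation on the simplex with a KL penalty, and no further subtlety is anticipated.
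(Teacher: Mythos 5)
Your proposal is correct and follows essentially the same route as the paper: form the Lagrangian of $\text{OP}_u$ with a multiplier $\lambda_u$ for the KL trust constraint and a multiplier for the simplex equality, differentiate the expected cost $F_u^r$ to obtain $C'_{u,i}$ (including the marginal-congestion term $c_e'(x_e^r)\,x_e^u$ under the BPR parameterization), and solve the stationarity condition for the Gibbs-form solution with the $p_{u,i}$-independent prefactor absorbed into the normalization $\mu_u'$. Your explicit load-based rewriting of $F_u^r$ and the sign convention on the simplex multiplier are only cosmetic differences from the paper's derivation.
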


\begin{proof}
    Let $\mathscr{L}_u (\textbf{p}_{u}, \lambda_u, \mu_u)$ as follows denote the Lagrangian of user $u$’s optimization problem $\text{OP}_u$:
\begin{align*}
    \mathscr{L}_u &(\textbf{p}_{u}, \lambda_u, \mu_u) = \sum^{k_u}_{i=1}p_{u,i} C_{u,i}(\textbf{p}_{u},\textbf{p}_{-u}) \\
    & + \lambda_u \left[\sum^{k_u}_{i=1} p_{u,i} \log \left(\frac{p_{u,i}}{p_{u,i}^o}\right)-T_u\right] - \mu_{u} \sum^{k_u}_{i=1} p_{u, i}, 
\end{align*} where $\lambda_u \in \mathbb{R}_{+}$, $\mu_{u} \in \mathbb{R}_{+}$ are the Lagrange multipliers.
We consider
$
    C_{u,i}(\textbf{p}_{u},\textbf{p}_{-u}) = \sum_{e \in s_{u, i}} t_e\big(1+\alpha\big(\frac{x_e^r(\textbf{p})}{k_e}\big)^\beta\big),
$ and the first-order condition $\partial \mathscr{L}_u/\partial p_{u, i} = 0$ for each $p_{u, i}$ becomes:
\begin{align*}
    \sum_{e \in s_{u, i}} & t_e \bigg[1+\alpha\bigg(\frac{x_e^r(\textbf{p})}{k_e}\bigg)^\beta + \beta x_e^u(\textbf{p}_u) \frac{\alpha}{k_e}\bigg(\frac{x_e^r(\textbf{p})}{k_e}\bigg)^{\beta-1}\bigg] \\
    & \qquad \qquad \quad + \lambda_u \left[\log\left(\frac{p_{u, i}}{p_{u, i}^o}\right) + 1\right]-\mu_{u}=0.
\end{align*} By letting $\log(\mu_{u}^\prime) = - \mu_{u}/\lambda_u$, then
\begin{align*}
    \frac{C^\prime_{u, i}(\textbf{p})}{\lambda_u} + \log\left(\frac{p_{u, i}}{p_{u, i}^o}\right) + 1 + \log(\mu_{u}^\prime)=0.
\end{align*}
Therefore, for all $u \in \mathcal{U}, s_{u, i} \in \mathcal{S}_u$, we have each
\begin{equation*}
    p_{u, i}^* =  \frac{p_{u, i}^o \exp\left(\frac{-C^\prime_{u, i}(\textbf{p})}{\lambda_u}-1\right)}{\mu_{u}^\prime}, 
\end{equation*} where $\mu_u, \forall u \in \mathcal{U}$ are normalizations ensuring $\textbf{p}^* \in \mathcal{P}$.
\end{proof}

Under stable conditions, for each \( s_{u, i} \in \mathcal{S}_u \), \( u \in \mathcal{U} \) being used, \( C^\prime_{u, i}(\textbf{p}) \) remains identical, and \(\textbf{p}\) remains the same as \(\textbf{p}^o\). However, if there is a demand attack, each \( C^\prime_{u, i}(\textbf{p}) \) perceived by the NRS will differ, causing \(\textbf{p}\) to deviate from \(\textbf{p}^o\) and tilt towards paths with lower perceived costs (which may not be the true costs) caused by misinformed demand. Additionally, the extent of deviation from previous \(\textbf{p}^o\) to current \(\textbf{p}\) depends on the multipliers \(\lambda_u\), which are associated with the trust score \(T_u\) for each \(u \in \mathcal{U}\). Let $\textbf{p}_u^*$ denote the trusted recommendation for user $u$ in Proposition \ref{prop:miti}, the optimal $\lambda_u^*$ can be found by numerically evaluating the dual function defined below. 
\begin{equation}
\begin{aligned}
    &\mathscr{G}_u(\lambda_u) := \min_{\textbf{p}_u \in \mathcal{P}_u} F_u^r(\textbf{p}_{u},\textbf{p}_{-u}) + \lambda_u \left[D(\textbf{p}_u||\textbf{p}_u^o)-T_u\right]\\
    &= \sum^{k_u}_{i=1}p_{u,i}^* C_{u,i}(\textbf{p}_{u}^*,\textbf{p}_{-u})
    + \lambda_u \big[\sum^{k_u}_{i=1} p_{u,i}^* \log \big(\frac{p_{u,i}^*}{p_{u,i}^o}\big)-T_u\big]
\end{aligned}
\end{equation}
which leads us to the dual problem of $\text{OP}_u$ as follows:
\begin{equation}
\begin{aligned}
    \text{DOP}_u: \max_{\lambda_u \in \mathbb{R}_{+}} \ & \mathscr{G}_u(\lambda_u)
\end{aligned}
\label{eq:DOP_u} 
\end{equation}

Note that Proposition \ref{prop:miti} considers the situation that the trust score $T_u$ is carefully determined so that TC is active with $\lambda_u >0$. When $\lambda_u=0$, which suggests that TC is non-binding, potentially due to the UE recommendation $\textbf{p}_u$ defined in Definition \ref{def:NRS} under current traffic condition is close to the previous experienced $\textbf{p}_u^o$, or because the user has high confidence in the current recommendation (i.e., $T_u$ is large), the NRS can then recommend the user with $\textbf{p}_u^* \in \argmin_{\textbf{p}_{u} \in \mathcal{P}_u} F_u^r(\textbf{p}_{u}, \textbf{p}_{-u})$.

To this end, the proposed trust mechanism can be summarized by the following Algorithm \ref{algo:miti}.
\begin{algorithm}
  \caption{Trust Mechanism}\label{algo:miti}
  \begin{algorithmic}[1]
    \State\textbf{Input} NRS component $\mathscr{R}=\left\langle   \mathcal{G}, (c_e(\cdot))_{e \in \mathcal{E}}, \mathcal{U}, (\mathcal{S}_u)_{u \in \mathcal{U}}  \right\rangle$
    \State\textbf{Collect} trust scores $T_u$ from all the users $u \in \mathcal{U}$
    \State\textbf{Obtain} $\textbf{p}_u^o, \forall u \in \mathcal{U}$ from historical data
    \State\textbf{Initialize} recommendation $\textbf{p}$ based on \eqref{eq:PGD}
    \For{$u \in \mathcal{U}$}
        \If{TC for $u$ non-binding}
            \State $\textbf{p}_u^*=\textbf{p}_u$
        \Else
            \State $p_{u, i}^* =  \frac{p_{u, i}^o \exp\left((-C^\prime_{u, i}(\textbf{p})/\lambda_u)-1\right)}{\mu_{u, i}^\prime}, \forall s_{u, i} \in \mathcal{S}_u$
            \State $\lambda_u^* \in \argmax_{\lambda_u \in \mathbb{R}_{+}} \mathscr{G}_u(\lambda_u)$
        \EndIf
    \EndFor
    \State \textbf{Return} $\textbf{p}^*$ to users and PRADA risk evaluator
  \end{algorithmic}
\end{algorithm}

In practice, recognizing the vulnerabilities illustrated in Fig. \ref{fig:NRS_process}, an NRS can consult the PRADA risk evaluator to assess risks for threat models from attack libraries. If the risk metrics TI and NI in the reports surpass the company's standards, one approach for the NRS to mitigate these risks is to collect user trust scores and implement a trust mechanism. The PRADA risk evaluator can then use the trust recommendation $\textbf{p}^*$ from Algorithm \ref{algo:miti} to reassess the risks and ensure they align with the company's standards.

\subsection{Sensitivity Analysis for Trust Mitigation}
In this subsection, we aim to examine the relationship between the multiplier $\lambda_u$, the optimal value for problem \eqref{prob:RS2}, and the user's trust score $T_u$. Suppose $T_u$ is changed to $T_u^\prime$ (due to positive or negative news related to the NRS), where $T_u^\prime = T_u + \eta_u$, the TC then becomes 
\begin{equation}
    D(\textbf{p}_u||\textbf{p}_u^o)-T_u \leq \eta_u.
\end{equation}

\begin{proposition}
    Consider the optimization problem $\text{OP}_u$. Under the assumptions that $F_u$ is convex in $\textbf{p}_u$ and $T_u, T_u^\prime$ are chosen so that the Slater's condition holds, let $v^*_{u, 0}$ and $v^*_{u, \eta_u}$ denote the optimal value for $\text{OP}_{u}$ associated with $T_u$, and $T_u^\prime$, respectively, and let $\lambda^*_u$ represent the optimal dual variable for $\text{OP}_u$ associated with $T_u$, then 
    \begin{equation}
        v^*_{u, t_u} \geq v^*_{u, 0} - \lambda^*_u \eta_u.
        \label{eq:opt_value}
    \end{equation}
\label{prop:tu}
\vspace{-3mm}
\end{proposition}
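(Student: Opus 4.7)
The plan is to treat this as a textbook sensitivity (weak-duality) bound on the perturbation function of a convex program. The key observation is that $\text{OP}_u$ is a convex problem (convex objective $F_u^r$ in $\textbf{p}_u$, convex constraint $D(\textbf{p}_u\|\textbf{p}_u^o)-T_u\le 0$, and convex/simplex feasible set $\mathcal{P}_u$), and Slater's condition is assumed for both the baseline problem (RHS equal to $0$) and the perturbed problem (RHS equal to $\eta_u$). Therefore strong duality holds in both cases, and I can freely use the Lagrangian characterisation of the optimal value.

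The first step is to write down, via strong duality for the baseline problem,
\begin{equation*}
v^*_{u,0} \;=\; \min_{\textbf{p}_u\in\mathcal{P}_u}\Bigl\{F_u^r(\textbf{p}_u,\textbf{p}_{-u}) + \lambda^*_u\bigl[D(\textbf{p}_u\|\textbf{p}_u^o)-T_u\bigr]\Bigr\},
\end{equation*}
with $\lambda^*_u\ge 0$ the optimal dual variable from $\text{DOP}_u$. The second step is the main bookkeeping step: take any $\textbf{p}_u'\in\mathcal{P}_u$ feasible for the perturbed problem, i.e.\ $D(\textbf{p}_u'\|\textbf{p}_u^o)-T_u\le\eta_u$. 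Because $\lambda^*_u\ge 0$, the quantity $\lambda^*_u\bigl[D(\textbf{p}_u'\|\textbf{p}_u^o)-T_u-\eta_u\bigr]$ is non-positive, so
\begin{equation*}
F_u^r(\textbf{p}_u',\textbf{p}_{-u}) \;\ge\; F_u^r(\textbf{p}_u',\textbf{p}_{-u}) + \lambda^*_u\bigl[D(\textbf{p}_u'\|\textbf{p}_u^o)-T_u\bigr] - \lambda^*_u\eta_u,
\end{equation*}
and the right-hand side is bounded below by $v^*_{u,0}-\lambda^*_u\eta_u$ by the Lagrangian expression above.

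The third step is to minimise the left-hand side over all feasible $\textbf{p}_u'$ of the perturbed problem, which yields $v^*_{u,\eta_u}\ge v^*_{u,0}-\lambda^*_u\eta_u$, exactly the claim. Along the way I would briefly note the standard interpretation: $-\lambda^*_u$ is a subgradient of the (convex) perturbation function at $\eta_u=0$, so this is the subgradient inequality specialised to the trust-constraint perturbation, and it justifies calling $\lambda^*_u$ a \emph{trust risk factor} — the sensitivity of the user's optimal expected cost to relaxing the trust budget.

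I do not anticipate a serious obstacle: the only subtle points are (i) ensuring Slater's assumption really gives strong duality so that the Lagrangian identity for $v^*_{u,0}$ is exact (not merely a bound), and (ii) confirming $\lambda^*_u\ge 0$ for the sign manipulation, both of which are automatic once convexity and Slater's condition are invoked. If one wanted to avoid direct appeal to the convex-perturbation theorem, the three-line argument above is fully self-contained and that is the version I would write out.
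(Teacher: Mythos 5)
Your proposal is correct and follows essentially the same route as the paper: both invoke strong duality to write $v^*_{u,0}=\mathscr{G}_u(\lambda^*_u)$, bound the Lagrangian at an arbitrary feasible point of the perturbed problem using $\lambda^*_u\ge 0$ and $D(\textbf{p}_u\|\textbf{p}_u^o)-T_u\le\eta_u$, and then minimize over such points to obtain $v^*_{u,\eta_u}\ge v^*_{u,0}-\lambda^*_u\eta_u$. The subgradient interpretation you add is consistent with the paper's reading of $\lambda^*_u$ as a trust risk factor.
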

\begin{proof}
    Suppose that $\textbf{p}_u \in \mathcal{P}_u$ is any feasible point for $\text{OP}_u$ associated with $T_u^\prime$, then by strong duality
    \begin{align*}
        v^*_{u, 0} = \mathscr{G}_u(\lambda_u^*) &\leq F_u^r(\textbf{p}_{u},\textbf{p}_{-u}) + \lambda_u^*\left[D(\textbf{p}_u||\textbf{p}_u^o)-T_u\right]\\
        & \leq v^*_{u, \eta_u} + \lambda_u^* \eta_u,
    \end{align*} which completes the proof.
\end{proof}
Proposition \ref{prop:tu} indicates that if $\lambda^*_u$ is large and user $u$ shrinks his/her trust region (i.e., $\eta_u < 0$), then the optimal value of user $u$'s expected cost becomes much higher. Conversely, if $\lambda^*_u$ is small, even if user $u$ expands his/her trust region (i.e., $\eta_u > 0$), the optimal value of user $u$'s expected cost does not decrease substantially. 
That is, when $\lambda^*_u$ is large (small), the optimal value of user $u$'s expected cost is more (less) sensitive to changes in the trust region. Thus, $\lambda^*_u$ can be interpreted as a \textit{risk factor}. A larger $\lambda^*_u$ indicates a higher risk for user $u$ in modifying the trust region, as it leads to more significant changes in the user's optimal expected cost.

It is essential for users to understand that the trust score must be carefully determined due to the trade-offs between low and high values. For instance, if a user has low confidence in the NRS, resulting in a very small $T_u$, the recommendation will closely follow the user's past experiences. While this minimizes the impact of potential demand attacks, it also means the user may lose the chance to adapt to gradually changing traffic conditions if no attack occurs. On the other hand, a high trust score allows users to receive recommendations that reflect the latest traffic conditions, optimizing their travel time when there is no attack. However, this high trust also increases susceptibility to demand attacks, potentially leading to more significant manipulation of their recommendations.

\section{Discussion through Case Study}\label{sec:experiment}
We use the traffic network abstracted in Fig. \ref{fig:new_exp_1} as a case study of our PRADA framework, where we adopt the structure from the Sioux Falls network \cite{leblanc1975efficient}, and utilize the BPR function for the cost $c_e(\cdot)$ on each road $e \in \mathcal{E}$ with parameters $\alpha=0.4$, $\beta=2$, and $k_e=50$ for simplicity. The number displayed on each edge represents the free-flow time cost $t_e$. Then, we focus on the case where $20$ users seeking to travel from node $2$ to $17$ (OD $2$-$17$) and other $20$ users from node $9$ to $19$ (OD $9$-$19$). The feasible path set for users is specified in Table \ref{tab:setup1}. 

\begin{table}[htbp]
\caption{Case Study Setup}
\begin{center}
\begin{tabular}{ccl}
\toprule
OD pair & User \# & Feasible paths \\
\midrule
$2$-$17$ & $1$ to $20$ & \thead[l]{Path $1$: 2-6-8-9-10-16-17 \\ Path $2$: 2-6-8-9-10-17 \\ 
Path $3$: 2-6-8-16-17} \\
\midrule
$9$-$19$ & $21$ to $40$ & \thead[l]{Path $1$: 9-10-11-14-15-19 \\ Path $2$: 9-10-15-19 \\ 
Path $3$: 9-10-16-17-19 \\ 
Path $4$: 9-10-17-19} \\
\bottomrule \\[-0.3em]
\multicolumn{3}{c}{BPR cost with $\alpha=0.4$, $\beta=2$, $k_e=50, \forall e \in \mathcal{E}$.}
\end{tabular}
\end{center}
\label{tab:setup1}
\vspace{-5mm}
\end{table}

\subsection{Risks under Different Attacker Models}
In the context of a misinformed demand attack, attack methods (1)-(5) in subsection \ref{sec:att_methods} lead to fabricated user demands for a set $\mathcal{K} \subset \mathcal{V} \times \mathcal{V}$ of distinct OD pairs. We consider the case where the attacker has a local-targeted objective, and aims to make NRS recommend a level of $\gamma=20$ flow load from authentic users passing $(10, 17)$, the target road. (The flow load without attack is $12$, originally.) We compare the risk in terms of TI and NI of the following types of attackers in Fig. \ref{fig:exp1_}. This risk report provides the PRADA risk evaluator with a holistic overview, highlighting the attacks that require the most attention and urgent mitigation.

\begin{figure}
    \centering
    \includegraphics[width=2.9in]{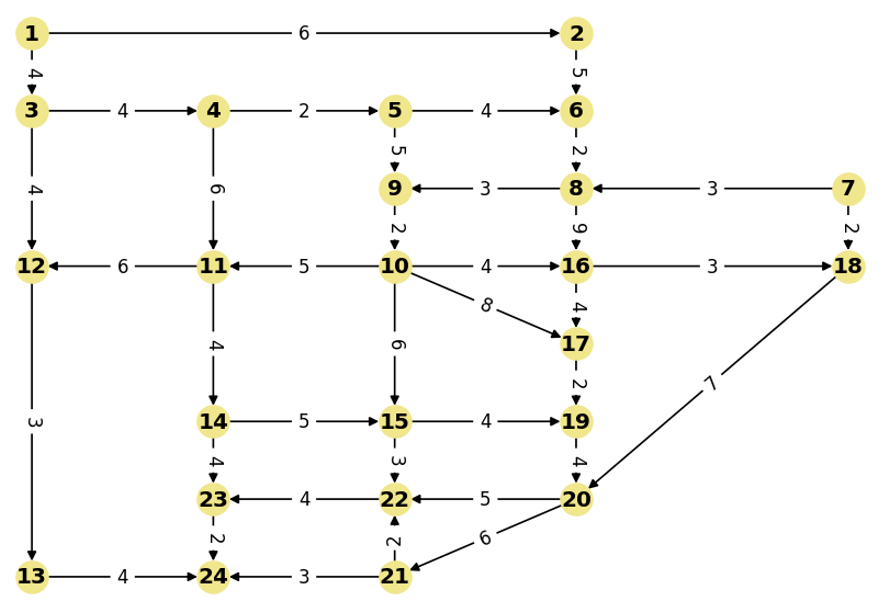}
    \caption{An example network (based on the network structure of Sioux Falls) for our case study. The value on each edge denotes the free-flow road travel time.}
    \label{fig:new_exp_1}
\vspace{-5mm}
\end{figure}

\begin{figure*}
    \centering
    \includegraphics[width=5.in]{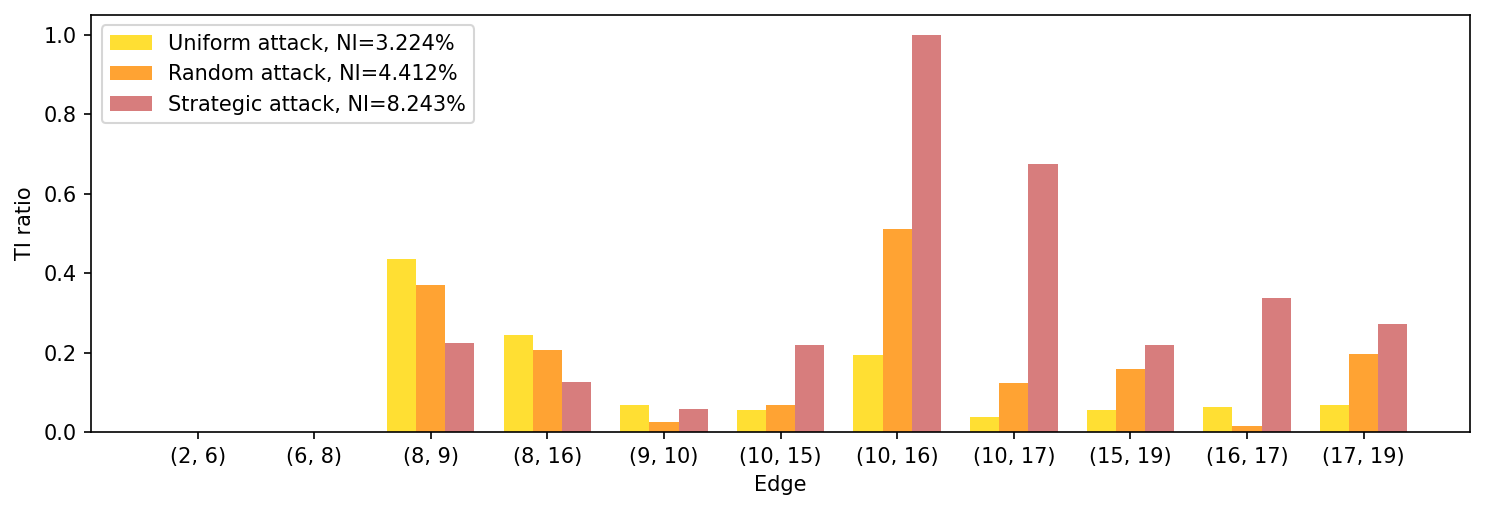}
    \caption{Risk report in terms of TI (local-targeted impact on roads along users' feasible paths) and NI (network-wide impact) when encountering non-strategic (random, uniform) and strategic attackers.}
    \label{fig:exp1_}
\vspace{-3mm}
\end{figure*}

\begin{figure*}
    \centering
    \includegraphics[width=5.in]{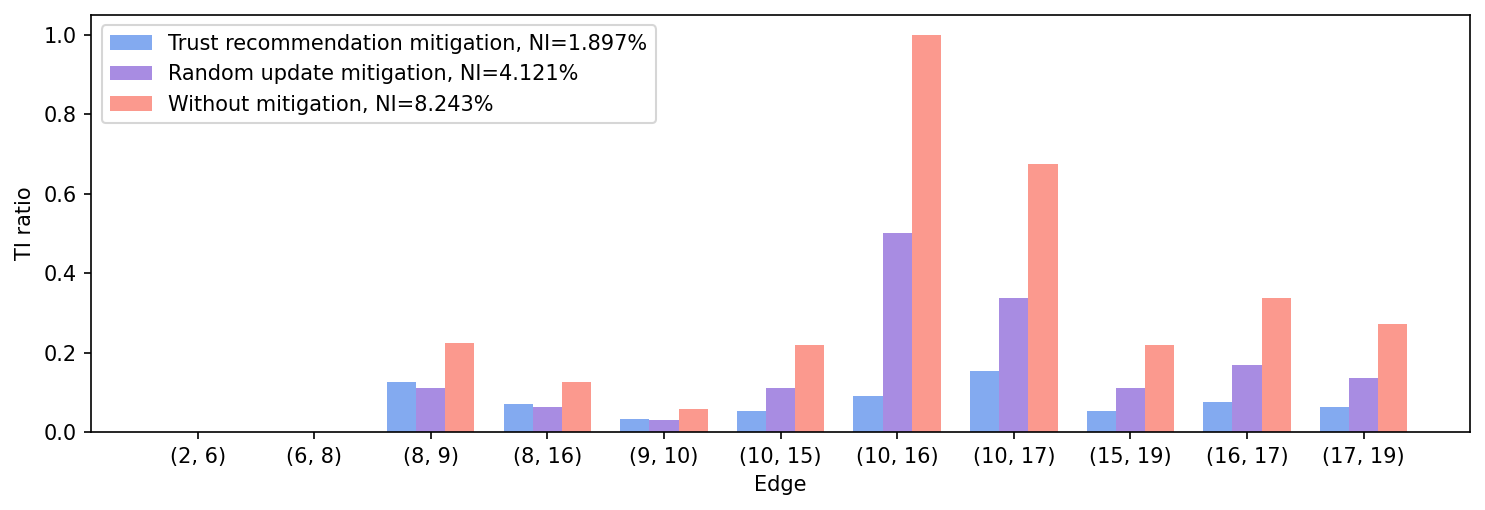}
    \caption{Risk report in terms of TI and NI when adopting mitigation methods (random update and trusted recommendation) under strategic attacks. The trusted recommendation mitigation can effectively reduce both the severe local-targeted and network-wide impacts compared to the random update one.}
    \label{fig:mitigate}
\vspace{-5mm}
\end{figure*}

\subsubsection{Strategic Attacker} A strategic attacker who has the knowledge of NRS can identify the desired fake demand levels by solving the leader-follower problem in section \ref{sec:md_problem}. The expected flow on the target road $(10, 17)$ caused by authentic users meets the desired level $\gamma \geq 20$ by generating a total of less than $35$ non-existent demands within the traffic network.

\subsubsection{Non-strategic Attacker}
A non-strategic attacker may not know how NRS generates the recommendation for users. Hence, the uniform attacker evenly distributes the total demand across all OD pairs near the target road, while the random attacker distributes the demand randomly among the OD pairs. Note that for comparison, the non-strategic attackers are restricted to allocating the same amount of demands, totaling $35$, as the optimal strategic attacker.

Fig. \ref{fig:exp1_} shows the risk report in terms of network-wide impact (NI) and local-targeted impact (TI) on roads/edges along users' feasible paths. First, we can observe that none of the attack scenarios affect roads $(2, 6)$ and $(6, 8)$. Since true users with the OD pair $2$-$17$ must travel through these two roads to reach their destination (as all three feasible paths include these roads), the attacker cannot impact these roads by redistributing users through fake demands. This suggests that the road users must go through are at lower risk of demand attack, as it is hard for malicious entities to influence the flow load by fabricating non-existent demands. Secondly, we see that all attack scenarios result in changes in traffic flow on roads other than $(2, 6)$ and $(6, 8)$, with the strategic attacks causing more severe impacts on most roads. Moreover, the alternative roads to the target road $(10, 17)$, including $(10, 15)$, $(10, 16)$, and $(16, 17)$, are at higher risk. This is because the attacker achieves their goal by manipulating the NRS to redistribute users originally passing through these roads to the target road, which also illustrates the analysis in Section \ref{sec:sensitivity}. Lastly, the network-wide impact (NI) indicates that the risks posed by strategic attackers are higher compared to non-strategic ones. This heightened risk in the risk report emphasizes the urgent need for mitigation strategies against intelligent attackers.

\subsection{Potential Mitigation of the Risk}
In this subsection, we aim to assess the risk when the NRS adopts the trusted recommendation described in Definition \ref{def:NRS2}. To evaluate whether such a trust mechanism can effectively mitigate the impact of misinformed demand attacks, we compare it with a straightforward random update mitigation method and a scenario without any mitigation. The risk report associated with these mitigation methods is shown in Fig. \ref{fig:mitigate}. This report aids the PRADA risk evaluator in determining the most efficient mitigation mechanism.

\subsubsection{Trusted Recommendation Mitigation} In practice, users may not adhere to recommendations that differ significantly from previously received recommendations for the same OD pair. Therefore, to ensure user compliance, the NRS incorporates user trust constraints into its recommendations, called trusted recommendations. Such a trust mechanism ensures that the current recommendation does not deviate significantly from the previous one for the same OD pair. The degree of deviation allowed in the current recommendation depends on the user's trust score $T_u$. A higher trust score indicates greater user trust in the integrity of NRS, with the user interpreting deviations as responses to sudden changes in traffic conditions rather than malicious demand attacks. In this context, the current recommendation for all users is given based on Algorithm \ref{algo:miti}.

\subsubsection{Random Update Mitigation}
In practice, not all users receive the updated recommendations simultaneously. Therefore, we consider the random update algorithm that may assist in mitigating the risk of sudden changes in demands perceived by the NRS caused by demand attacks. In this context, each user gets his/her current recommendation  $\textbf{p}_u$ with a predefined probability $0 < \pi_u < 1$; otherwise remains $\textbf{p}_u^o$. That is, 
\begin{equation}
    \textbf{p}_u = 
    \begin{cases}
\textbf{p}_u \ \text{satisfying \eqref{def:NRS}}, \quad \text{w.p.} \ \pi_u,\\
        \textbf{P}_u^o, \quad \qquad \qquad \text{w.p.}\ 1-\pi_u.
    \end{cases}
\label{eq:update_P_u_r}
\end{equation}
The probability of update $\pi_u$ may vary based on the user's driving habits or the capabilities of V2X technologies within the region containing the user's origin and destination. Here, we consider $\pi_u=0.5, \forall u \in \mathcal{U}$ for simplicity.

Fig. \ref{fig:mitigate} shows the risk report in terms of NI and TI on roads/edges along users' feasible paths when the NRS adopts mitigation methods. With $\pi_u = 0.5$ for all users $u \in \mathcal{U}$, the random update mitigation method reduces the risk from strategic attacks by half. It is important to note that a lower $\pi_u$ could potentially decrease risks further, but it may also result in users losing access to the most recent recommendations based on current traffic conditions.
As for the proposed trusted recommendation, it can effectively mitigate risk by constraining flow load changes based on user trust scores. Additionally, comparing road $(10, 16)$ with road $(8, 16)$, we can observe that the trusted recommendation mitigation is more obvious when the roads are originally facing higher risks of demand attack. 

\subsection{Discussion on the Resilience Paradox}

To this end, a natural question is: \textit{Can the locally targeted attack lead to a better overall outcome (total travel time costs for users) in some situations?} We can start with a carefully crafted example leveraging the classical Braess’ network \cite{Braess}. 

\begin{figure}[!h]
    \centering
    \begin{subfigure}[t]{0.21\textwidth}
        \centering
        \includegraphics[width=\textwidth]{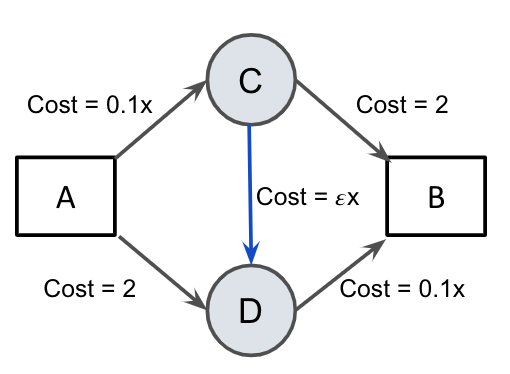}
        \caption{Without attack}
        \label{fig:paradox1}
    \end{subfigure}
    \hfill
    \begin{subfigure}[t]{0.21\textwidth}
        \centering \includegraphics[width=\textwidth]{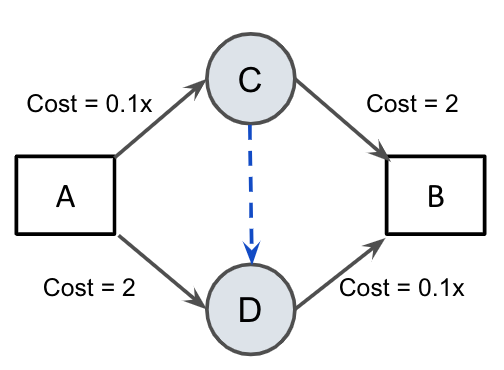}
        \caption{Under attack}
        \label{fig:paradox2}
    \end{subfigure}
    \caption{A carefully crafted example for the discussion on the ``Resilience Paradox''.}
\label{fig:paradox}
\vspace{-5mm}
\end{figure}

Within the transportation network shown in Fig. \ref{fig:paradox}, there are $30$ users aiming to go from node A to node B, and the $\epsilon$ is small enough so that the cost on C-D is close to $0$ even though all $30$ users are passing through. Before the attack (illustrated in Fig. \ref{fig:paradox1}), the RS will recommend a mixed strategy $(1/3, 1/3, 1/3)$ on path A-C-B, A-C-D-B, and A-D-B, respectively. The overall costs on these three paths are all $4$, which leads to a total travel time cost of $120$ for users. Suppose the attacker wants more ``users'' to pass D-B by fabricating a large demand on C-D to make C-D seem congested to the RS, as in Fig. \ref{fig:paradox2}. The RS will recommend a strategy $(0.5, 0, 0.5)$ on paths A-C-B, A-C-D-B, and A-D-B, respectively. The overall costs on A-C-B and A-D-B are both $3.5$, which leads to the total travel time cost for users becoming $105$. Therefore, we can conclude that the cost under attack is better than the performance without attack in this carefully crafted example.

\section{Conclusion}
This paper assesses the risk of potential informational attacks on navigational recommendation systems (NRS). We introduce the attack model and identify vulnerabilities that attackers can exploit to launch demand attacks, achieving locally targeted goals that benefit certain groups or businesses. Then, we propose a holistic framework for proactive risk assessment of demand attacks (PRADA)  that integrates necessary models.
Given that modern attackers are often intelligent, our focus, from the perspective of the PRADA risk evaluator, is on strategic attacks. We analyze the interaction between the attacker and the incentive-compatible NRS through a Stackelberg game. Our study indicates that users are at high risk when facing strategic attacks that target specific roads by creating non-existent demands for OD pairs with alternative path options. To mitigate these risks, we introduce a trust mechanism, and our investigation shows that it is a viable approach to reducing the risk posed by misinformed demand attacks in both local-targeted and network-wide senses.


\bibliography{reference}
\bibliographystyle{IEEEtran}

\end{document}